\tikzset{cross/.style={cross out, draw=black, minimum size=2*(#1-\pgflinewidth), inner sep=0pt, outer sep=0pt},
cross/.default={3pt}}
\newtheorem{theorem}{Theorem}
\newtheorem{definition}{Definition}
\title{Quantization of the Hamilton Equations of Motion}
\author[1]{Ramon Jose C. Bagunu}
\author[2]{Eric A. Galapon}
\affil[1]{Department of Physical Sciences and Mathematics, University of the Philippines Manila}
\affil[2]{Theoretical Physics Group, National Institute of Physics, University of the Philippines Diliman}
\begin{document}
\maketitle

\begin{abstract}
One of the fundamental problems in quantum mechanics is finding the correct quantum image of a classical observable that would correspond to experimental measurements. We investigate for the appropriate quantization rule that would yield a Hamiltonian that obeys the quantum analogue of Hamilton's equations of motion, which includes differentiation of operators with respect to another operator. To give meaning to this type of differentiation, Born and Jordan established two definitions called the differential quotients of first type and second type. In this paper we modify the definition for the differential quotient of first type and establish its consistency with the differential quotient of second type for different basis operators corresponding to different quantizations. Theorems and differentiation rules including differentiation of operators with negative powers and multiple differentiation were also investigated. We show that the Hamiltonian obtained from Weyl, simplest symmetric, and Born-Jordan quantization all satisfy the required algebra of the quantum equations of motion.
\end{abstract}

\section{Introduction}
Physical quantities that can be experimentally measured are called observables. In classical mechanics, a one-dimensional particle can have observables represented by real valued functions, $f(q,p)$, of the position $q$ and momentum $p$ at any instant of time. Quantum observables, on the other hand, are generally constructed by means of quantization \cite{j2degossonbook, j3cohenbook, j7degosson2016, j8degosson2011, j9degosson2014, j10degosson2013,  magadan, domingo, d9cohen, bj1925a, pablico, flores1, flores2}. It is a linear mapping of the classical observable $f(q,p)$ into a hermitian operator $\textbf{F}(\textbf{q},\textbf{p})$ in the Hilbert space representation of the configuration space of the particle. This is accomplished by promoting the scalar position $q$ and momentum $p$ into the canonically conjugate position and momentum operators $\textbf{q}$ and $\textbf{p}$, respectively,  defined as
\begin{equation}\label{b1}
\textbf{q}\psi(q)=q\psi(q)\quad\text{  and  }\quad\textbf{p}\psi(q)=-i\hbar\frac{\partial}{\partial q}\psi(q).
\end{equation}
The conjugacy of the position and momentum operators mean that they satisfy the commutation relation 
\begin{equation}\label{b2}
[\textbf{q},\textbf{p}]=i\hbar\textbf{1},
\end{equation}
where $\textbf{1}$ is the identity operator. The quantization of $f(q,p)$ is then obtained by replacing $q$ and $p$ with their operator versions, $\textbf{q}$ and $\textbf{p}$. However, the non-commutativity of $\textbf{q}$ and $\textbf{p}$ raises ambiguity because this implies that there is no unique way of associating a given classical observable with a hermitian operator. This is due to the potentially infinitely many possible ways of arranging the products of $\textbf{q}$ and $\textbf{p}$. For instance, the monomial $f(q,p)=p^mq^n$ for $m,n\geq0$ can be quantized as $\textbf{p}^m\textbf{q}^n$, $\textbf{q}^n\textbf{p}^m$, $\textbf{q}\textbf{p}^m\textbf{q}^{n-1}$, etc or a linear sum of these operators.  In general, we can write the quantized operator for the monomial $p^mq^n$ as the linear sum given by \cite{domingo,bender1}
\begin{equation}\label{b3}
\textbf{A}_{m,n}:=\sum_{j=0}^{n}a_j^{(n)}\textbf{q}^j\textbf{p}^m\textbf{q}^{n-j}\quad\quad\quad\text{or}\quad\quad\quad\ \textbf{A}_{m,n}:=\sum_{j=0}^mb_j^{(m)}\textbf{p}^j\textbf{q}^n\textbf{p}^{m-j},
\end{equation}
where the assignment of the coefficients $a_j^{(n)}$ and $b_j^{(m)}$ are dictated by what is known as an ordering rule. The popular ordering rules include the Weyl, simplest symmetric, and the Born-Jordan ordering \cite{domingo, magadan, j2degossonbook, j7degosson2016, j8degosson2011, j9degosson2014, j10degosson2013}. Additionally, the normalization condition is also satisfied by these operators. That is, 
\begin{equation}\label{b4}
    \sum_{j=0}^na_j^{(n)}=1 \quad\quad\text{and}\quad\quad\sum_{j=0}^mb_j^{(m)}=1.
\end{equation}
This condition is imposed so that the observable $p^mq^n$ can be obtained when going back to the classical regime, \textit{i.e.} taken $\textbf{q}$ and $\textbf{p}$ are replaced by their classical counterparts $q$ and $p$. Since the quantum images formed from quantization in \eqref{b3} is a linear sum of several products of $\textbf{q}$ and $\textbf{p}$, it is a necessity to ensure that the sum of their individual weights $a_j^{(n)}$ and $b_j^{(m)}$ should be $1$.
\\
\indent
For the Weyl ordering, which was introduced in \cite{bender2, 19dgelfand}, the classical function $p^mq^n$ is mapped into a hermitian operator called the \textit{Weyl-ordered forms}, namely
\begin{equation}\label{b5}
	\textbf{T}_{m,n}:=\frac{1}{2^n}\sum_{j=0}^n\binom{n}{j}\textbf{q}^j\textbf{p}^m\textbf{q}^{n-j} \quad\text{or}\quad  \textbf{T}_{m,n}:=\frac{1}{2^m}\sum_{j=0}^m\binom{m}{j}\textbf{p}^j\textbf{q}^n\textbf{p}^{m-j}.
\end{equation}
These basis operators have coefficients given by $a_j^{(n)}=\frac{1}{2^n}\binom{n}{j}$ and $b_j^{(m)}=\frac{1}{2^m}\binom{m}{j}$. Each individual term in \eqref{b5} has a different weight, as indicated by the binomial coefficients. Weyl quantization is the inverse of Wigner transform, a method commonly used to determine the corresponding classical observable of a quantum operator in the phase space formulation of quantum mechanics \cite{j7degosson2016}. Also, Weyl quantization has become the standard ordering rule since it is the most symmetrical and it reflects the covariant property of Hamiltonian dynamics with respect to linear canonical transforms \cite{j9degosson2014,j7degosson2016}. It was also shown \cite{degosson2011} that when using Weyl quantization, Schr{\"o}dinger's equation can be derived from Hamilton's equations of motion, and vice versa. This then leads to mathematical equivalence of the two theories. The Weyl ordered forms are also defined for quantizing monomials with negative power. For instance, the Weyl quantization of the monomial $p^{-m}q^n$ is
\begin{equation}\label{b6}
    \textbf{T}_{-m,n}=\frac{1}{2^n}\sum_{j=0}^n\binom{n}{j}\textbf{q}^j\textbf{p}^{-m}\textbf{q}^{n-j}
\end{equation}
and for the monomial $p^mq^{-n}$ given by
\begin{equation}\label{b7}
    \textbf{T}_{m,-n}=\frac{1}{2^m}\sum_{j=0}^m\binom{m}{j}\textbf{p}^j\textbf{q}^{-n}\textbf{p}^{m-j}.
\end{equation}
\\
\indent
The next well-known quantization is the simplest symmetrization rule. In this ordering rule, the factors $\textbf{p}^m\textbf{q}^n$ and $\textbf{q}^n\textbf{p}^m$ are equally satisfactory quantizations \cite{d9cohen}. The basis operators resulting from this quantization rule are given by
\begin{equation}\label{b8}
    \textbf{S}_{m,n}:=\frac{1}{2}\bigg(\textbf{p}^m\textbf{q}^n+\textbf{q}^n\textbf{p}^m\bigg).
\end{equation}
The operators $\textbf{S}_{m,n}$ can be obtained by setting $a_j^{(n)}=\frac{1}{2}(\delta_{j,0}+\delta_{j,n})$ and $b_j^{(m)}=\frac{1}{2}(\delta_{j,0}+\delta_{j,m}$) in equation \eqref{b3}. Similar to Weyl quantization, we can also define quantum images of monomials with negative powers given by
\begin{equation}\label{b9}
    \textbf{S}_{-m,n}=\frac{1}{2}\bigg(\textbf{p}^{-m}\textbf{q}^n+\textbf{q}^n\textbf{p}^{-m}\bigg)
\end{equation}
and
\begin{equation}\label{b10}
    \textbf{S}_{m,-n}=\frac{1}{2}\bigg(\textbf{p}^m\textbf{q}^{-n}+\textbf{q}^{-n}\textbf{p}^m\bigg).
\end{equation}
\\
\indent
The last ordering rule that we will discuss is the Born-Jordan ordering \cite{j8degosson2011, bj1925a}. Here, the quantum image of the monomial $p^mq^n$ is defined such that all of its possible quantizations have equal weight. This averaging rule results to basis operators $\textbf{B}_{m,n}$, namely
\begin{equation}\label{b11}
	\textbf{B}_{m,n}:=\frac{1}{n+1}\sum_{j=0}^n\textbf{q}^j\textbf{p}^m\textbf{q}^{n-j} \quad\text{or}\quad  \textbf{B}_{m,n}:=\frac{1}{m+1}\sum_{j=0}^m\textbf{p}^j\textbf{q}^n\textbf{p}^{m-j}.
\end{equation}
This ordering rule preserves the equivalence of the Schr{\"o}dinger picture and the Heisenberg picture of quantum mechanics \cite{j2degossonbook,j7degosson2016,j9degosson2014,j10degosson2013}, which will be a main area of concern in this paper. For monomials with negative power, the quantization of $p^{-m}q^n$ is \begin{equation}\label{b12}
	\textbf{B}_{-m,n}:=\frac{1}{n+1}\sum_{j=0}^n\textbf{q}^j\textbf{p}^{-m}\textbf{q}^{n-j}
\end{equation}
and for $p^mq^{-n}$ we have
\begin{equation}\label{b13}
 \textbf{B}_{m,-n}:=\frac{1}{m+1}\sum_{j=0}^m\textbf{p}^j\textbf{q}^{-n}\textbf{p}^{m-j}.
\end{equation}
\\
\indent 
At this point one is likely to ask if which of these quantization rules lead to the correct quantum image. Since the resulting operators from these three quantization rules are different, then their respective measurable values will be different as well. This issue can only be resolved by comparing the spectrum of these quantum images and the measured quantities in the laboratory, which, unfortunately, has not been resolved.
\\
\indent
To give more clarity on the quantization problem, let us review the works of de Gosson \cite{j2degossonbook,j9degosson2014} which discussed the equivalence of the Schr{\"o}dinger and Heisenberg pictures of quantum mechanics. It starts in the Schr{\"o}dinger picture, also known as wave mechanics, where the operators are constant (unless they are explicitly time-dependent) and the states $\psi$ evolve in time as
\begin{equation}\label{p1}
    \psi(t)=\textbf{U}(t,t_0)\psi(t_0)
\end{equation}
where
\begin{equation}\label{p2}
    \textbf{U}(t,t_0)=e^{i\textbf{H}_{\mathcal{S}}(t-t_0)/\hbar}
\end{equation}
is a family of unitary operators, and $\textbf{H}_{\mathcal{S}}$ is the Schr{\"o}dinger Hamiltonian. The time evolution of $\psi$ is given by the Schr{\"o}dinger equation
\begin{equation}\label{p3}
    i\hbar\frac{\partial \psi}{\partial t}=\textbf{H}_{\mathcal{S}}\psi.
\end{equation}
\\
\indent
Continuing their discussion in the Heisenberg picture, the state vectors are time-independent operators while the observables are time-dependent operators. Then for an observable $\textbf{A}_{\mathcal{S}}$ in the Schr{\"o}dinger picture, it becomes a time-dependent operator $\textbf{A}_{\mathcal{H}}$ in the Heisenberg picture which satisfies
\begin{equation}\label{p4}
    i\hbar\frac{d\textbf{A}_{\mathcal{H}}}{dt}=i\hbar\frac{\partial\textbf{A}_{\mathcal{H}}}{\partial t}+[\textbf{A}_{\mathcal{H}},\textbf{H}_{\mathcal{H}}]
\end{equation} 
where $\textbf{H}_{\mathcal{H}}$ is the Hamiltonian in the Heisenberg picture. From their derivation, the equivalence of the two representations of quantum mechanics can be then observed as follows. Suppose a ket in the Schr{\"o}dinger picture given by
\begin{equation}\label{p5}
    |\psi_{\mathcal{S}}(t)\rangle=\textbf{U}(t,t_0) |\psi_{\mathcal{S}}(t_0)\rangle
\end{equation}
becomes the constant ket in the Heisenberg picture as
\begin{equation}\label{p6}
    |\psi_{\mathcal{H}}\rangle=\textbf{U}(t,t_0)^{\dagger} |\psi_{\mathcal{S}}(t)\rangle=|\psi_{\mathcal{S}}(t_0)\rangle
\end{equation}
where we removed its time dependence. For a Hamiltonian $\textbf{H}_{\mathcal{S}}$ in the Schr{\"o}dinger representation, it then becomes  
\begin{equation}\label{p7}
    \textbf{H}_{\mathcal{H}}(t)=\textbf{U}(t,t_0)^{\dagger}\textbf{H}_{\mathcal{S}}\textbf{U}(t,t_0)
\end{equation}
in the Heisenberg representation. Now initially at time $t=t_0$, we have 
\begin{equation}\label{p8}
    \textbf{U}(t_0,t_0)=\textbf{1}
\end{equation}
and so 
\begin{equation}\label{p9}
    \textbf{H}_{\mathcal{H}}(t_0)=\textbf{H}_{\mathcal{S}}.
\end{equation}
Since in the Heisenberg picture the energy is constant, the Hamiltonian operator $\textbf{H}_{\mathcal{H}}(t)$ must be a constant of motion. Therefore, $\textbf{H}_{\mathcal{H}}(t)=\textbf{H}_{\mathcal{H}}(t_0)$ or equivalently,
\begin{equation}\label{p10}
    \textbf{H}_{\mathcal{H}}(t)=\textbf{H}_{\mathcal{S}}
\end{equation}
for all $t$. Now, a consequence of equation \eqref{p10} is that both the Hamiltonian operators in the Schr{\"o}dinger and Heisenberg pictures must be quantized using the same quantization rules. Lastly, it was discussed that if Heisenberg's matrix mechanics is correct and is set to be equivalent with Schr{\"o}dinger's wave mechanics, then the Schr{\"o}dinger Hamiltonian should be quantized using the Born-Jordan quantization since the Heisenberg picture of quantum mechanics breaks down if other quantization rules are used.
\\
\indent
Born and Jordan wrote their paper \say{On quantum mechanics} \cite{bj1925a} in an attempt to give light to Heisenberg's matrix mechanics \cite{heisenberg} and give it a firm mathematical basis, which was then followed up in the multi-dimensional case \cite{bj1926b} where Heisenberg is their co-author. Here, Born and Jordan argued that the classical equations of motion should be consistent with its quantum counterpart. That is, the Hamilton equations of classical observables should also apply to their operator versions in quantum mechanics.
\\
\indent
Let us recall the equations of motion in Hamiltonian mechanics given by
\begin{equation}\label{p11}
\dot{q}=\frac{\partial H}{\partial p} \quad\quad\quad\text{  and  }\quad\quad\quad\dot{p}=-\frac{\partial H}{\partial q},
\end{equation}
which govern the trajectory of a particle in the classical phase space. In terms of Poisson brackets, we have
\begin{equation}\label{p12}
\{H,q\}=-\frac{dq}{dt}\quad\quad\quad\text{  and  }\quad\quad\quad\{H,p\}=-\frac{dp}{dt}.
\end{equation}	
Combining \eqref{p11} and \eqref{p12}, we write the equations of motion as
\begin{equation}\label{p13}
\begin{split}
\{H,q\}=-\frac{\partial H}{\partial p}\quad\quad\quad\text{and}\quad\quad\quad\{H,p\}=\frac{\partial H}{\partial q}.
\end{split}
\end{equation}
Born and Jordan then assumed that the operator counterpart of these classical observables follow this analogy, with the Poisson brackets replaced by the commutator divided by $i\hbar$. Therefore, equations of motion of these quantum observables in the Hilbert space are then constructed as
\begin{equation}\label{p14}
\begin{split}
&[\textbf{H},\textbf{q}]=-i\hbar\frac{\partial \textbf{H}}{\partial \textbf{p}}\quad\quad\quad\text{and}\quad\quad\quad[\textbf{H},\textbf{p}]=i\hbar\frac{\partial \textbf{H}}{\partial \textbf{q}}.
\end{split}
\end{equation}
Equation \eqref{p14} now dictates the dynamics of the operators in the quantum regime. However, we can see that one needs to evaluate a partial derivative of an operator with respect to another operator in order to obtain the quantum equations of motion. This is unfamiliar since we are only used to finding the derivative of functions with respect to a variable, or calculating for the derivative of operators with respect to a variable. We now encounter a problem regarding the validity of the quantum equations of motion. 
\\
\indent
The paper is arranged as follows. In the next section, we discuss in detail the Born-Jordan quantization and the motivation for its emergence. The differential quotient of first type is also introduced here. This is followed by a more in-depth discussion on the differential quotient of first type in Section \ref{2.2}, which is then applied to the basis operators of the Weyl, symmetric, and Born-Jordan quantization. In Section \ref{2.3}, the limit definition of the derivative of operators is presented. Also called the differential quotient of second type, we compare this with the first definition of the derivative and discuss some theorems on taking a derivative of an operator with respect to another operator. This is extended to operators of arbitrary ordering. Lastly, we apply everything to formulate the quantum equations of motion, the quantized version of the classical Hamilton's equations of motion in Section \ref{2.4}. The dynamics of the resulting operators are discussed, along with the comparison of the respective quantum equations of motion for Weyl, symmetric, and Born-Jordan basis operators.

\section{The Born-Jordan Quantization}\label{2.1}
We have shown that to establish the quantum equations of motion, it is necessary to obtain a derivative of an operator with respect to another operator. However, some confusion arises when one tries to do so. Unlike the more familiar ordinary calculus, the quantities involved are operators that generally do not commute. We also note that the physical meaning of differentiation with respect to an operator is somehow unclear and  unnecessary most of the time.
\\
\indent
To start, let us revisit the paper of Born and Jordan \cite{bj1925a} which tackled symbolic differentiation for non-commuting observables. The first definition, also called the \say{differential quotient of first type}, states that for an operator 
\begin{equation}\label{j1}
\textbf{y}=\prod_{m=1}^s\textbf{y}_{l_m}=\textbf{y}_{l_1}\textbf{y}_{l_2}\textbf{y}_{l_3}...\textbf{y}_{l_s}
\end{equation}
which is a product of non-commuting operators $\textbf{y}_{l}$, the partial derivative with respect to $\textbf{y}_k$ is given by
\begin{equation}\label{j2}
\bigg(\frac{\partial\textbf{y}}{\partial \textbf{y}_k}\bigg)_1=\sum_{r=1}^s\delta_{l_r,k}\prod_{m=r+1}^s\textbf{y}_{l_m}\prod_{m=1}^{r-1}\textbf{y}_{l_m}.
\end{equation}
The subscript of the partial derivative on the left hand side (LHS) indicates that it is being differentiated using the first definition.
The differentiation was discussed by de Gosson in \cite{j2degossonbook,j9degosson2014}, and we now explain it in more detail. Looking at equation \eqref{j2}, the first step is to look at all the factors from left to right. The Kronecker delta indicates that we only pick a factor equal to $\textbf{y}_k$. Then the first product in \eqref{j2} indicates that we write the terms that come after the factor we picked. Then, the second product indicates that we copy the terms on the left of the said factor. To further understand this definition, let us look at some examples. For instance, let's say we have the operator
\begin{equation}\label{j3}
\textbf{A}=\textbf{p}\textbf{p}\textbf{q}\textbf{q}\textbf{q}\textbf{p},
\end{equation}
 where $\textbf{q}$ and $\textbf{p}$ are the position and momentum operators. Taking the partial derivative with respect to the momentum operator gives
\begin{equation}\label{j4}
\bigg(\frac{\partial\textbf{A}}{\partial \textbf{p}}\bigg)_1=\textbf{p}\textbf{q}\textbf{q}\textbf{q}\textbf{p}+\textbf{q}\textbf{q}\textbf{q}\textbf{p}\textbf{p}+\textbf{p}\textbf{p}\textbf{q}\textbf{q}\textbf{q}.
\end{equation}
The operator we are differentiating $\textbf{A}$ with is $\textbf{p}$, so we choose the factors $\textbf{p}$. Looking at $\textbf{A}$ and its leftmost $\textbf{p}$, the order of factors that come after that is $\textbf{p}\textbf{q}\textbf{q}\textbf{q}\textbf{p}$, and no factor is on its left side. Thus, the first term on the right-hand side (RHS) of \eqref{j4} is $\textbf{p}\textbf{q}\textbf{q}\textbf{q}\textbf{p}$. The next $\textbf{p}$ that we pick is on the second factor of $\textbf{A}$. Copying the terms that come after it means we write $\textbf{q}\textbf{q}\textbf{q}\textbf{p}$. Then the term that precedes it is $\textbf{p}$. When we write these two together, we have $\textbf{q}\textbf{q}\textbf{q}\textbf{p}\textbf{p}$. This is the second term in the RHS of \eqref{j4}. The last $\textbf{p}$ remaining is the fifth factor in $\textbf{A}$. There are no factors that come after it and the factors on its left are $\textbf{p}\textbf{p}\textbf{q}\textbf{q}\textbf{q}$. This is where the third term in the RHS of \eqref{j4} comes from. Thus we have the partial derivative of $\textbf{A}$ with respect to $\textbf{p}$ in \eqref{j4}.
\\
\indent
Next, we take the partial derivative of $\textbf{A}$ with respect to $\textbf{q}$. Following the previous procedure, we pick the factors $\textbf{q}$ in the RHS of \eqref{j3}. For the first $\textbf{q}$, the factors that come after it are $\textbf{q}\textbf{q}\textbf{p}$. On its left side we have $\textbf{p}\textbf{p}$. Thus, the first term in the partial derivative is $\textbf{q}\textbf{q}\textbf{p}\textbf{p}\textbf{p}$. For the second $\textbf{q}$ that we pick, we have $\textbf{q}\textbf{p}$ on the succeeding factors and $\textbf{p}\textbf{p}\textbf{q}$ on its left. When we write these together, we have $\textbf{q}\textbf{p}\textbf{p}\textbf{p}\textbf{q}$ for the second term. For the last $\textbf{q}$ that we pick which is the fourth factor of $\textbf{A}$, there is a $\textbf{p}$ on its right then we have $\textbf{p}\textbf{p}\textbf{q}\textbf{q}$ that comes before it. Then we have $\textbf{p}\textbf{p}\textbf{p}\textbf{q}\textbf{q}$ for the last term of the derivative. The partial derivative then is given by
\begin{equation}\label{j5}
\bigg(\frac{\partial\textbf{A}}{\partial \textbf{q}}\bigg)_1= \textbf{q}\textbf{q}\textbf{p}\textbf{p}\textbf{p}+\textbf{q}\textbf{p}\textbf{p}\textbf{p}\textbf{q}+\textbf{p}\textbf{p}\textbf{p}\textbf{q}\textbf{q}.
\end{equation}
\\
\indent
The second definition of the derivative, also called the \say{differential quotient of second type} \cite{bj1925a}, is more familiar. Given the same operator in \eqref{j1}, this definition states that 
\begin{equation}\label{j6}
\bigg(\frac{\partial\textbf{y}}{\partial \textbf{y}_k}\bigg)_2=\lim\limits_{\delta\rightarrow0}\frac{\textbf{y}(\textbf{y}_{1},\textbf{y}_{2},,...,\textbf{y}_{k}+\delta\textbf{1},...,\textbf{y}_{s})-\textbf{y}(\textbf{y}_{1},\textbf{y}_{2},...,\textbf{y}_{k},...,\textbf{y}_{s})}{\delta},
\end{equation}
which resembles the limit definition of partial derivative in ordinary calculus. The subscript of the partial derivative on the LHS indicates that it is being differentiated using the second definition. Then, from the previous operator in \eqref{j3}, we have
\begin{equation}\label{j7a}
\bigg(\frac{\partial\textbf{A}}{\partial \textbf{p}}\bigg)_2=\lim\limits_{\delta\rightarrow0}\frac{(\textbf{p}+\delta\textbf{1})(\textbf{p}+\delta\textbf{1})\textbf{q}\textbf{q}\textbf{q}(\textbf{p}+\delta\textbf{1})-\textbf{p}\textbf{p}\textbf{q}\textbf{q}\textbf{q}\textbf{p}}{\delta},
\end{equation}
to which the numerator can be expanded such that
\begin{equation}\label{j7b
}
\bigg(\frac{\partial\textbf{A}}{\partial \textbf{p}}\bigg)_2=\lim\limits_{\delta\rightarrow0}\frac{\textbf{p}^2\textbf{q}^3\textbf{p}+\textbf{p}^2\textbf{q}^3\delta+2\delta\textbf{p}\textbf{q}^3\textbf{p}+2\delta^2\textbf{p}\textbf{q}^3+\delta^2\textbf{q}^3\textbf{p}+\delta^3\textbf{q}^3-\textbf{p}^2\textbf{q}^3\textbf{p}}{\delta}
\end{equation}
where it simplifies as
\begin{equation}\label{j7}
\bigg(\frac{\partial\textbf{A}}{\partial \textbf{p}}\bigg)_2=2\textbf{p}\textbf{q}^3\textbf{p}+\textbf{p}^2\textbf{q}^3.
\end{equation}
On the other hand, differentiation with respect to the position operator yields
\begin{equation}\label{j8a}
\bigg(\frac{\partial\textbf{A}}{\partial \textbf{q}}\bigg)_2=\lim\limits_{\delta\rightarrow0}\frac{\textbf{p}\textbf{p}(\textbf{q}+\delta\textbf{1})(\textbf{q}+\delta\textbf{1})(\textbf{q}+\delta\textbf{1})\textbf{p}-\textbf{p}\textbf{p}\textbf{q}\textbf{q}\textbf{q}\textbf{p}}{\delta}.
\end{equation}
The numerator in \eqref{j8a} can be expanded to obtain
\begin{equation}\label{j8b}
\bigg(\frac{\partial\textbf{A}}{\partial \textbf{q}}\bigg)_2=\lim\limits_{\delta\rightarrow0}\frac{\textbf{p}^2\textbf{q}^3\textbf{p}+3\delta\textbf{p}^2\textbf{q}^2\textbf{p}+3\delta^2\textbf{p}^2\textbf{q}\textbf{p}+\delta^3\textbf{p}^3-\textbf{p}^2\textbf{q}^3\textbf{p}}{\delta}
\end{equation}
to which the derivative results to
\begin{equation}\label{j8}
\bigg(\frac{\partial\textbf{A}}{\partial \textbf{q}}\bigg)_2=3\textbf{p}^2\textbf{q}^2\textbf{p}.
\end{equation}
\\
\indent
We note that these differentiation rules apply to arbitrary operators and they are not necessarily equal to each other. For the differential quotient of first type, the partial derivative of a product of operators is invariant under cyclic rearrangement of the factors and the rules of ordinary partial differentiation only applies for commuting observables \cite{bj1925a}. The first definition of the derivative was also said to be more natural compared to the differential quotient of second type since it leads to a simple formulation of the variational principle of quantum mechanics \cite{bj1926b}.

An interesting assertion was made in \cite{bj1926b} which says that specifically for a classical Hamiltonian $H(q,p)$, it should be quantized into an operator $\textbf{H}(\textbf{q},\textbf{p})$ such that
\begin{equation}\label{j9}
\bigg(\frac{\partial\textbf{H}}{\partial \textbf{p}}\bigg)_1=\bigg(\frac{\partial\textbf{H}}{\partial \textbf{p}}\bigg)_2\quad\text{    and    }\quad\bigg(\frac{\partial\textbf{H}}{\partial \textbf{q}}\bigg)_1=\bigg(\frac{\partial\textbf{H}}{\partial \textbf{q}}\bigg)_2.
\end{equation}
That is, the first and second definition of the partial derivative should be consistent with the Hamiltonian operator. As we can see in \eqref{j4} and \eqref{j7}, then \eqref{j5} and \eqref{j8}, the example operator $\textbf{A}$ cannot be a legitimate Hamiltonian if we follow this assertion. With these conditions \eqref{j9}, the task now is to find a quantization rule that fits this claim which, as stated in their paper, is the Born-Jordan ordering that was consequently supported in \cite{j2degossonbook,j9degosson2014}.

We now discuss this first by focusing on the classical Hamiltonian function of the form $H(q,p)=p^mq^n$. Its arbitrarily-quantized operator counterpart is given by
\begin{equation}\label{j10}
\textbf{H}=\sum_{j=0}^na_j^{(n)}\textbf{q}^{n-j}\textbf{p}^m\textbf{q}^j=\sum_{j=0}^mb_j^{(m)}\textbf{p}^{m-j}\textbf{q}^n\textbf{p}^j,
\end{equation}
where the coefficients $a_j^{(n)}$ and $b_j^{(m)}$ are still to be identified. We expand the second form of the Hamiltonian in equation \eqref{j10} to get
\begin{equation}\label{j11}
\textbf{H}=a_0^{(n)}\textbf{p}^m\textbf{q}^n+a_1^{(n)}\textbf{q}\textbf{p}^m\textbf{q}^{n-1}+a_2^{(n)}\textbf{q}^2\textbf{p}^m\textbf{q}^{n-2}+...+a_n^{(n)}\textbf{q}^n\textbf{p}^m.
\end{equation}
Let us take the partial derivative of each term in \eqref{j11}. For the first term, we have 
\begin{equation}\label{add1}
    \bigg(\frac{\partial (\textbf{p}^m\textbf{q}^n)}{\partial \textbf{p}}\bigg)_1=\textbf{p}^{m-1}\textbf{q}^n+\textbf{p}^{m-2}\textbf{q}^n\textbf{p}+\textbf{p}^{m-3}\textbf{q}^n\textbf{p}^2+...+\textbf{q}^n\textbf{p}^{m-1}
\end{equation}
which has $m$ terms. Now the derivative of the second term in \eqref{j11} gives 
\begin{equation}\label{add2}
    \bigg(\frac{\partial (\textbf{q}\textbf{p}^m\textbf{q}^{n-1})}{\partial \textbf{p}}\bigg)_1=\textbf{p}^{m-1}\textbf{q}^n+\textbf{p}^{m-2}\textbf{q}^n\textbf{p}+\textbf{p}^{m-3}\textbf{q}^n\textbf{p}^2+...+\textbf{q}^n\textbf{p}^{m-1}
\end{equation}
which is the same as \eqref{add1}. This result can be obtained from all the terms in \eqref{j11}. Therefore, the partial derivative of the Hamiltonian is
\begin{equation}\label{j12}
\begin{split}
\bigg(\frac{\partial\textbf{H}}{\partial \textbf{p}}\bigg)_1&=a_0^{(n)}(\textbf{p}^{m-1}\textbf{q}^n+\textbf{p}^{m-2}\textbf{q}^n\textbf{p}+...+\textbf{q}^n\textbf{p}^{m-1})\\&+a_1^{(n)}(\textbf{p}^{m-1}\textbf{q}^n+\textbf{p}^{m-2}\textbf{q}^n\textbf{p}+...+\textbf{q}^n\textbf{p}^{m-1})\\&+a_2^{(n)}(\textbf{p}^{m-1}\textbf{q}^n+\textbf{p}^{m-2}\textbf{q}^n\textbf{p}+...+\textbf{q}^n\textbf{p}^{m-1})\\&+...\\&+a_n^{(n)}(\textbf{p}^{m-1}\textbf{q}^n+\textbf{p}^{m-2}\textbf{q}^n\textbf{p}+...+\textbf{q}^n\textbf{p}^{m-1}),
\end{split}
\end{equation}
which can be simplified as
\begin{equation}\label{j13}
\bigg(\frac{\partial\textbf{H}}{\partial \textbf{p}}\bigg)_1=\sum_{k=0}^na_k^{(n)}\sum_{j=0}^{m-1}\textbf{p}^j\textbf{q}^n\textbf{p}^{m-1-j}.
\end{equation}
Imposing that $\textbf{H}$ is normalized, we set $\sum_{k=0}^na_k^{(n)}=1$. Therefore,
\begin{equation}\label{j14}
\bigg(\frac{\partial\textbf{H}}{\partial \textbf{p}}\bigg)_1=\sum_{j=0}^{m-1}\textbf{p}^j\textbf{q}^n\textbf{p}^{m-1-j}.
\end{equation}

For the second definition of the derivative which acts like an ordinary partial derivative, we have 
\begin{equation}\label{add3}
    \bigg(\frac{\partial(\textbf{p}^m\textbf{q}^n)}{\partial \textbf{p}}\bigg)_2=m\textbf{p}^{m-1}\textbf{q}^n,
\end{equation}
\begin{equation}\label{add4}
    \bigg(\frac{\partial(\textbf{q}\textbf{p}^m\textbf{q}^{n-1})}{\partial \textbf{p}}\bigg)_2=m\textbf{q}\textbf{p}^{m-1}\textbf{q}^{n-1},
\end{equation}
\begin{equation}\label{add5}
    \bigg(\frac{\partial(\textbf{q}^2\textbf{p}^m\textbf{q}^{n-2})}{\partial \textbf{p}}\bigg)_2=m\textbf{q}^2\textbf{p}^{m-1}\textbf{q}^{n-2}
\end{equation}
and so on. Therefore, applying this partial derivative to the Hamiltonian in \eqref{j11} gives
\begin{equation}\label{j15}
\begin{split}
\bigg(\frac{\partial\textbf{H}}{\partial \textbf{p}}\bigg)_2&=a_0^{(n)}m\textbf{p}^{m-1}\textbf{q}^n+a_1^{(n)}m\textbf{q}\textbf{p}^{m-1}\textbf{q}^{n-1}+a_2^{(n)}m\textbf{q}^2\textbf{p}^{m-1}\textbf{q}^{n-2}...+a_n^{(n)}m\textbf{q}^n\textbf{p}^{m-1}\\&=m\sum_{j=0}^na_j^{(n)}\textbf{q}^j\textbf{p}^{m-1}\textbf{q}^{n-j}.
\end{split}
\end{equation}
Equating the two derivatives from \eqref{j14} and \eqref{j15} yields
\begin{equation}\label{j16}
\sum_{j=0}^{m-1}\textbf{p}^j\textbf{q}^n\textbf{p}^{m-1-j}=m\sum_{j=0}^na_j^{(n)}\textbf{q}^j\textbf{p}^{m-1}\textbf{q}^{n-j}.
\end{equation}
Let us recall the operator in the Born-Jordan basis given by 
\begin{equation}\label{add6}
    \textbf{B}_{m-1,n}=\frac{1}{n+1}\sum_{j=0}^n\textbf{q}^j\textbf{p}^{m-1}\textbf{q}^{n-j}=\frac{1}{m}\sum_{j=0}^{m-1}\textbf{p}^j\textbf{q}^n\textbf{p}^{m-1-j}
\end{equation}
so we can obtain
\begin{equation}\label{j18}
\sum_{j=0}^{m-1}\textbf{p}^j\textbf{q}^n\textbf{p}^{m-1-j}=\frac{m}{n+1}\sum_{j=0}^n\textbf{q}^j\textbf{p}^{m-1}\textbf{q}^{n-j}.
\end{equation}
Notice that both the left hand sides of \eqref{j16} and \eqref{j18} are equal. We then equate their right hand sides which leads to
\begin{equation}\label{j19}
m\sum_{j=0}^na_j^{(n)}\textbf{q}^j\textbf{p}^{m-1}\textbf{q}^{n-j}=\frac{m}{n+1}\sum_{j=0}^n\textbf{q}^j\textbf{p}^{m-1}\textbf{q}^{n-j}.
\end{equation}
The unknown coefficients $a_j^{(n)}$ can now be evaluated by inspection to be
\begin{equation}\label{j20}
a_j^{(n)}=\frac{1}{n+1}.
\end{equation}
Therefore, the Hamiltonian operator takes the form
\begin{equation}\label{j21}
\textbf{H}=\frac{1}{n+1}\sum_{j=0}^n\textbf{q}^{n-j}\textbf{p}^m\textbf{q}^j.
\end{equation}
Next to evaluate the coefficients $b_j^{(m)}$, we first expand the second form of the Hamiltonian given by
\begin{equation}\label{j22}
\textbf{H}=b_0^{(m)}\textbf{q}^n\textbf{p}^m+b_1^{(m)}\textbf{p}\textbf{q}^n\textbf{p}^{m-1}+b_2^{(m)}\textbf{p}^2\textbf{q}^n\textbf{p}^{m-2}+...+b_m^{(m)}\textbf{p}^m\textbf{q}^n.
\end{equation}
Using the first definition, we take the partial derivative of each term with respect to $\textbf{q}$. For the first term in \eqref{j22}, we have
\begin{equation}\label{add7}
    \bigg(\frac{\partial (\textbf{q}^n\textbf{p}^m)}{\partial \textbf{q}}\bigg)_1=\textbf{q}^{n-1}\textbf{p}^m+\textbf{1}^{n-2}\textbf{p}^m\textbf{q}+\textbf{q}^{n-3}\textbf{p}^m\textbf{q}^2+...+\textbf{p}^m\textbf{q}^{n-1}
\end{equation}
which has $n$ terms. Now the derivative of the second term in \eqref{j22} gives 
\begin{equation}\label{add8}
    \bigg(\frac{\partial (\textbf{p}\textbf{q}^n\textbf{p}^{m-1})}{\partial \textbf{q}}\bigg)_1=\textbf{q}^{n-1}\textbf{p}^m+\textbf{1}^{n-2}\textbf{p}^m\textbf{q}+\textbf{q}^{n-3}\textbf{p}^m\textbf{q}^2+...+\textbf{p}^m\textbf{q}^{n-1}
\end{equation}
which is the same as \eqref{add7}. This result can be observed from all the terms in \eqref{j22}. Therefore, the partial derivative of the Hamiltonian is
\begin{equation}\label{add9}
\begin{split}
\bigg(\frac{\partial\textbf{H}}{\partial \textbf{q}}\bigg)_1&=b_0^{(m)}(\textbf{q}^{n-1}\textbf{p}^m+\textbf{q}^{n-2}\textbf{p}^m\textbf{q}+...+\textbf{p}^m\textbf{q}^{n-1})\\&+b_1^{(m)}(\textbf{q}^{n-1}\textbf{p}^m+\textbf{q}^{n-2}\textbf{p}^m\textbf{q}+...+\textbf{p}^m\textbf{q}^{n-1})\\&+b_2^{(m)}(\textbf{q}^{n-1}\textbf{p}^m+\textbf{q}^{n-2}\textbf{p}^m\textbf{q}+...+\textbf{p}^m\textbf{q}^{n-1})\\&+...\\&+b_m^{(m)}(\textbf{q}^{n-1}\textbf{p}^m+\textbf{q}^{n-2}\textbf{p}^m\textbf{q}+...+\textbf{p}^m\textbf{q}^{n-1}),
\end{split}
\end{equation}
which reduces to
\begin{equation}\label{j23}
\bigg(\frac{\partial\textbf{H}}{\partial \textbf{q}}\bigg)_1=\sum_{k=0}^mb_k^{(m)}\sum_{j=0}^{n-1}\textbf{q}^j\textbf{p}^m\textbf{q}^{n-1-j}.
\end{equation}
Again, we use the normalization condition that $\sum_{k=0}^mb_k^{(m)}=1$ so we have
\begin{equation}\label{add10}
\bigg(\frac{\partial\textbf{H}}{\partial \textbf{q}}\bigg)_1=\sum_{j=0}^{n-1}\textbf{q}^j\textbf{p}^m\textbf{q}^{n-1-j}.
\end{equation}
The derivative from the second definition then is applied to each term in the Hamiltonian in \eqref{j22} to obtain
\begin{equation}\label{add11}
    \bigg(\frac{\partial(\textbf{q}^{n}\textbf{p}^m)}{\partial \textbf{q}}\bigg)_2=n\textbf{q}^{n-1}\textbf{p}^{m},
\end{equation}
\begin{equation}\label{add12}
    \bigg(\frac{\partial(\textbf{p}\textbf{q}^n\textbf{p}^{m-1})}{\partial \textbf{q}}\bigg)_2=n\textbf{p}\textbf{q}^{n-1}\textbf{p}^{m-1},
\end{equation}
\begin{equation}\label{add13}
    \bigg(\frac{\partial(\textbf{p}^2\textbf{q}^n\textbf{p}^{m-2})}{\partial \textbf{q}}\bigg)_2=n\textbf{p}^2\textbf{q}^{n-1}\textbf{p}^{m-2}
\end{equation}
and so on. Therefore, applying this partial derivative to the Hamiltonian in \eqref{j22} gives
\begin{equation}\label{j24}
\begin{split}
\bigg(\frac{\partial\textbf{H}}{\partial \textbf{q}}\bigg)_2&=b_0^{(m)}n\textbf{q}^{n-1}\textbf{p}^m+b_1^{(m)}n\textbf{q}\textbf{p}^{m-1}\textbf{q}^{n-1}+b_2^{(m)}n\textbf{p}^2\textbf{q}^{n-1}\textbf{p}^{m-2}...+b_m^{(n)}n\textbf{p}^m\textbf{q}^{n-1}\\&=n\sum_{j=0}^mb_j^{(m)}\textbf{p}^j\textbf{q}^{n-1}\textbf{p}^{m-j}.
\end{split}
\end{equation}
Before equating \eqref{add10} and \eqref{j24}, we use first the equality
\begin{equation}\label{j25}
\textbf{B}_{m,n-1}=\frac{1}{n}\sum_{j=0}^{n-1}\textbf{q}^j\textbf{p}^m\textbf{q}^{n-1-j}=\frac{1}{m+1}\sum_{j=0}^m\textbf{p}^j\textbf{q}^{n-1}\textbf{p}^{m-j}
\end{equation}
to obtain
\begin{equation}\label{j26}
\sum_{j=0}^{n-1}\textbf{q}^j\textbf{p}^m\textbf{q}^{n-1-j}=\frac{n}{m+1}\sum_{j=0}^m\textbf{p}^j\textbf{q}^{n-1}\textbf{p}^{m-j}.
\end{equation}
We replace \eqref{add10} by \eqref{j26} then equate it to \eqref{j24} which is given by
\begin{equation}\label{j27}
\frac{n}{m+1}\sum_{j=0}^m\textbf{p}^j\textbf{q}^{n-1}\textbf{p}^{m-j}=n\sum_{j=0}^nb_j^{(m)}\textbf{p}^j\textbf{q}^{n-1}\textbf{p}^{m-j}.
\end{equation}
By inspection, we see that the equation \eqref{j27} is true provided that
\begin{equation}\label{j28}
b_j^{(m)}=\frac{1}{m+1}.
\end{equation}
The Hamiltonian can now be written as
\begin{equation}\label{j29}
\textbf{H}=\frac{1}{m+1}\sum_{j=0}^m\textbf{p}^j\textbf{q}^{n}\textbf{p}^{m-j}
\end{equation}

We can now observe that from the classical Hamiltonian $H(p,q)=p^mq^n$, the operator $\textbf{H}(\textbf{p},\textbf{q})$ that can satisfy the equalities given by condition \eqref{j9} is given by
\begin{equation}\label{j30}
\textbf{H}=\frac{1}{n+1}\sum_{j=0}^n\textbf{q}^{n-j}\textbf{p}^m\textbf{q}^j=\frac{1}{m+1}\sum_{j=0}^m\textbf{p}^j\textbf{q}^{n}\textbf{p}^{m-j}
\end{equation}
which are identical to the operators $\textbf{B}_{m,n}$ from the Born-Jordan quantization. We can now see that for the two definitions of the partial derivative to be consistent, the classical function $p^mq^n$ should be quantized into an operator given by $\textbf{B}_{m,n}$ from the Born-Jordan quantization. This also states that
\begin{equation}\label{j31}
\bigg(\frac{\partial\textbf{B}_{m,n}}{\partial \textbf{p}}\bigg)_1=\bigg(\frac{\partial\textbf{B}_{m,n}}{\partial \textbf{p}}\bigg)_2=m\textbf{B}_{m-1,n}
\end{equation}
and
\begin{equation}\label{j32}
\bigg(\frac{\partial\textbf{B}_{m,n}}{\partial \textbf{q}}\bigg)_1=\bigg(\frac{\partial\textbf{B}_{m,n}}{\partial \textbf{q}}\bigg)_2=n\textbf{B}_{m,n-1}.
\end{equation}
For this reason, Born-Jordan quantization stands as the preferred rule when it comes to the equivalence of the first and second definition of differentiation involving operators.
\section{More on the Differential Quotient of First Type}\label{2.2}
Previously, we have shown that the Born-Jordan quantization is necessary to satisfy the imposed conditions in \eqref{j9}. Now, how about the other quantization rules? What differentiation conditions do they satisfy? Here we investigate some properties that the differential quotient of first type shows, when applied to operators that are obtained from Weyl and simplest symmetric ordering.
\\
\indent
Let us first consider a classical Hamiltonian $H(q,p)=p^mq^n$, quantized using Weyl ordering rule to obtain $\textbf{H}(\textbf{q},\textbf{p})=\textbf{T}_{m,n}$. This takes the form
\begin{equation}\label{j33}
\textbf{T}_{m,n}=\frac{1}{2^n}\sum_{j=0}^n\binom{n}{j}\textbf{q}^j\textbf{p}^m\textbf{q}^{n-j}=\frac{1}{2^m}\sum_{j=0}^m\binom{m}{j}\textbf{p}^j\textbf{q}^n\textbf{p}^{m-j}.
\end{equation}
Differentiating each term in the first expression for $\textbf{T}_{m,n}$ gives
\begin{equation}\label{add14}
    \bigg(\frac{\partial (\textbf{p}^m\textbf{q}^n)}{\partial \textbf{p}}\bigg)_1=\textbf{p}^{m-1}\textbf{q}^n+\textbf{p}^{m-2}\textbf{q}^n\textbf{p}+\textbf{p}^{m-3}\textbf{q}^n\textbf{p}^2+...+\textbf{q}^n\textbf{p}^{m-1}
\end{equation}
which has $m$ terms. For the second term we have
\begin{equation}\label{add15}
    \bigg(\frac{\partial (\textbf{q}\textbf{p}^m\textbf{q}^{n-1})}{\partial \textbf{p}}\bigg)_1=\textbf{p}^{m-1}\textbf{q}^n+\textbf{p}^{m-2}\textbf{q}^n\textbf{p}+\textbf{p}^{m-3}\textbf{q}^n\textbf{p}^2+...+\textbf{q}^n\textbf{p}^{m-1}
\end{equation}
which is the same as \eqref{add14}. This result can be obtained for all the terms in $\textbf{T}_{m,n}$. Therefore,
\begin{equation}\label{j34}
\begin{split}
\bigg(\frac{\partial\textbf{T}_{m,n}}{\partial \textbf{p}}\bigg)_1=&\frac{1}{2^n}\bigg\{\binom{n}{0}\bigg[\textbf{p}^{m-1}\textbf{q}^n+\textbf{p}^{m-2}\textbf{q}^n\textbf{p}+...+\textbf{q}^n\textbf{p}^{m-1}\bigg]\\&\;\;\;+\binom{n}{1}\bigg[\textbf{p}^{m-1}\textbf{q}^n+\textbf{p}^{m-2}\textbf{q}^n\textbf{p}+...+\textbf{q}^n\textbf{p}^{m-1}\bigg]\\&\;\;\;+\binom{n}{2}\bigg[\textbf{p}^{m-1}\textbf{q}^n+\textbf{p}^{m-2}\textbf{q}^n\textbf{p}+...+\textbf{q}^n\textbf{p}^{m-1}\bigg]\\&\;\;\;+...\\&\;\;\;+\binom{n}{n}\bigg[\textbf{p}^{m-1}\textbf{q}^n+\textbf{p}^{m-2}\textbf{q}^n\textbf{p}+...+\textbf{q}^n\textbf{p}^{m-1}\bigg]\bigg\}
\end{split}
\end{equation}
which becomes
\begin{equation}\label{j35}
\bigg(\frac{\partial\textbf{T}_{m,n}}{\partial \textbf{p}}\bigg)_1=\frac{1}{2^n}\sum_{k=0}^n\binom{n}{k}\sum_{j=0}^{m-1}\textbf{p}^j\textbf{q}^n\textbf{p}^{m-1-j}.
\end{equation}
Let us recall that $\sum_{k=0}^n\binom{n}{k}=2^n$. Therefore,
\begin{equation}\label{add16}
\bigg(\frac{\partial\textbf{T}_{m,n}}{\partial \textbf{p}}\bigg)_1=\sum_{j=0}^{m-1}\textbf{p}^j\textbf{q}^n\textbf{p}^{m-1-j}.
\end{equation}
or equivalently
\begin{equation}\label{j36}
\bigg(\frac{\partial\textbf{T}_{m,n}}{\partial \textbf{p}}\bigg)_1=m\textbf{B}_{m-1,n}.
\end{equation}
We can notice that even though we started with the Weyl-ordered operator $\textbf{T}_{m,n}$, the derivative takes the form of the one in the Born-Jordan quantization. The same result can also be observed when taking the partial derivative with respect to $\textbf{q}$. That is,
\begin{equation}\label{j37}
\bigg(\frac{\partial\textbf{T}_{m,n}}{\partial \textbf{q}}\bigg)_1=n\textbf{B}_{m,n-1}.
\end{equation}
\\
We note that the first definition derivatives in \eqref{j36} and \eqref{j37} are not equal to their second definition counterparts (see \eqref{d1} and \eqref{e1}).

Now, suppose we start from the operator $\textbf{S}_{m,n}$ in the simplest symmetrization rule and take its derivative. Let us recall that
\begin{equation}\label{j38}
\textbf{S}_{m,n}=\frac{1}{2}(\textbf{p}^m\textbf{q}^n+\textbf{q}^n\textbf{p}^m).
\end{equation}
Taking the partial derivative with respect to the momentum gives
\begin{equation}\label{j39}
\begin{split}
\bigg(\frac{\partial\textbf{S}_{m,n}}{\partial \textbf{p}}\bigg)_1=&\frac{1}{2}\bigg\{\bigg[\textbf{p}^{m-1}\textbf{q}^n+\textbf{p}^{m-2}\textbf{q}^n\textbf{p}+...+\textbf{q}^n\textbf{p}^{m-1}\bigg]\\&\;+\bigg[\textbf{p}^{m-1}\textbf{q}^n+\textbf{p}^{m-2}\textbf{q}^n\textbf{p}+...+\textbf{q}^n\textbf{p}^{m-1}\bigg]\bigg\}
\end{split}
\end{equation}
that simplifies to
\begin{equation}\label{j40}
\bigg(\frac{\partial\textbf{S}_{m,n}}{\partial \textbf{p}}\bigg)_1=\sum_{j=0}^{m-1}\textbf{p}^j\textbf{q}^n\textbf{p}^{m-1-j},
\end{equation}
which can be written as 
\begin{equation}\label{j41}
\bigg(\frac{\partial\textbf{S}_{m,n}}{\partial \textbf{p}}\bigg)_1=m\textbf{B}_{m-1,n}.
\end{equation}
Similarly, the derivative with respect to $\textbf{q}$ is given by
\begin{equation}\label{j42}
\begin{split}
\bigg(\frac{\partial\textbf{S}_{m,n}}{\partial \textbf{q}}\bigg)_1=&\frac{1}{2}\bigg\{\bigg[\textbf{q}^{n-1}\textbf{p}^m+\textbf{q}^{n-2}\textbf{p}^m\textbf{q}+...+\textbf{p}^m\textbf{q}^{n-1}\bigg]\\&\;+\bigg[\textbf{q}^{n-1}\textbf{p}^m+\textbf{q}^{n-2}\textbf{p}^m\textbf{q}+...+\textbf{p}^m\textbf{q}^{n-1}\bigg]\bigg\}
\end{split}
\end{equation}
that can be written as
\begin{equation}\label{j43}
\bigg(\frac{\partial\textbf{S}_{m,n}}{\partial \textbf{q}}\bigg)_1=\sum_{j=0}^{n-1}\textbf{q}^j\textbf{p}^m\textbf{q}^{n-1-j},
\end{equation}
reducing to
\begin{equation}\label{j44}
\bigg(\frac{\partial\textbf{S}_{m,n}}{\partial \textbf{q}}\bigg)_1=n\textbf{B}_{m,n-1}.
\end{equation}
Again, we note that the first definition derivatives in \eqref{j41} and \eqref{j44} are not equal to their second definition counterparts (see \eqref{d1} and \eqref{e1}).

An observation can now be drawn from the obtained derivatives. Notice that
\begin{equation}\label{j45}
\bigg(\frac{\partial\textbf{B}_{m,n}}{\partial \textbf{p}}\bigg)_1=\bigg(\frac{\partial\textbf{T}_{m,n}}{\partial \textbf{p}}\bigg)_1=\bigg(\frac{\partial\textbf{S}_{m,n}}{\partial \textbf{p}}\bigg)_1=m\textbf{B}_{m-1,n}
\end{equation}
and
\begin{equation}\label{j46}
\bigg(\frac{\partial\textbf{B}_{m,n}}{\partial \textbf{q}}\bigg)_1=\bigg(\frac{\partial\textbf{T}_{m,n}}{\partial \textbf{q}}\bigg)_1=\bigg(\frac{\partial\textbf{S}_{m,n}}{\partial \textbf{q}}\bigg)_1=n\textbf{B}_{m,n-1}
\end{equation}
\\
Whatever basis we are starting with, the end result will always lean toward the Born-Jordan quantized operators. Somehow, the differential quotient of first type has a way of tailoring the derivative in favor of Born-Jordan quantization. We now arrive at the remark that the differential quotient of first type is consistent only when we are working with the basis operators in Born-Jordan quantization. This rule does not apply on operators $\textbf{T}_{m,n}$ and $\textbf{S}_{m,n}$. Consequently, this also raises an issue on the assertion made in \cite{bj1925a} given by the conditions in \eqref{j9}. Another fact that we consider is that the differential quotient of first type changes its value depending whether operators are written in normal order (momentum operators to the right) or antinormal order (momentum operators to the left). Since there is a lack of physical motivation for this differentiation rule, in some way we can also establish a definition that works on other basis operators and not limited only to operators $\textbf{B}_{m,n}$.
\\
\indent
Let us now consider the following case. Suppose we define a derivative such that
\begin{equation}\label{j47}
\bigg(\frac{\partial \textbf{y}}{\partial \textbf{y}_k}\bigg)_{W}=\frac{n}{2^{n-1}}\sum_{r=1}^s\delta_{l_r,k}\binom{n-1}{l-1}\prod_{m=r+1}^s\textbf{y}_{l_m}\prod_{m=1}^{r-1}\textbf{y}_{l_m}
\end{equation}
where $n$ is the number of $\textbf{y}_k$ terms written as
\begin{equation}\label{j48}
n=\sum_{r=1}^s\delta_{l_r,k}
\end{equation}
and $l$ is the number of $\textbf{y}_k$ terms up to the $r$th term given by
\begin{equation}\label{j49}
l=\sum_{t=1}^r\delta_{l_t,k}.
\end{equation}
We now use this differentiation rule \eqref{j47} and apply it to different operators. Starting with the Weyl operator $\textbf{T}_{m,n}$, its partial derivative with respect to $\textbf{p}$ is
\begin{equation}\label{j50}
\begin{split}
\bigg(\frac{\partial\textbf{T}_{m,n}}{\partial \textbf{p}}\bigg)_W=&\frac{m}{2^{m-1}}\frac{1}{2^n}\bigg\{\binom{n}{0}\bigg[\binom{m-1}{0}\textbf{p}^{m-1}\textbf{q}^n+\binom{m-1}{1}\textbf{p}^{m-2}\textbf{q}^n\textbf{p}+\\&\quad\quad\quad\quad\quad\quad\quad\quad\quad\quad\quad\quad\quad\;\;...+\binom{m-1}{m-1}\textbf{q}^n\textbf{p}^{m-1}\bigg]\\&\quad\quad\quad+\binom{n}{1}\bigg[\binom{m-1}{0}\textbf{p}^{m-1}\textbf{q}^n+\binom{m-1}{1}\textbf{p}^{m-2}\textbf{q}^n\textbf{p}+\\&\quad\quad\quad\quad\quad\quad\quad\quad\quad\quad\quad\quad\quad\;\;...+\binom{m-1}{m-1}\textbf{q}^n\textbf{p}^{m-1}\bigg]\\&\quad\quad\quad+...\\&\quad\quad\quad+\binom{n}{n}\bigg[\binom{m-1}{0}\textbf{p}^{m-1}\textbf{q}^n+\binom{m-1}{1}\textbf{p}^{m-2}\textbf{q}^n\textbf{p}+\\&\quad\quad\quad\quad\quad\quad\quad\quad\quad\quad\quad\quad\quad\;\;...+\binom{m-1}{m-1}\textbf{q}^n\textbf{p}^{m-1}\bigg]\bigg\},
\end{split}
\end{equation}
that can be simplified to
\begin{equation}\label{j51}
\bigg(\frac{\partial\textbf{T}_{m,n}}{\partial \textbf{p}}\bigg)_W=\frac{m}{2^{m-1}}\sum_{j=0}^{m-1}\binom{m-1}{j}\textbf{p}^j\textbf{q}^n\textbf{p}^{m-1-j},
\end{equation}
which takes the form 
\begin{equation}\label{j52}
\bigg(\frac{\partial\textbf{T}_{m,n}}{\partial \textbf{p}}\bigg)_W=m\textbf{T}_{m-1,n}.
\end{equation}
We also have the derivative with respect to $\textbf{q}$ given by
\begin{equation}\label{j53}
\begin{split}
\bigg(\frac{\partial\textbf{T}_{m,n}}{\partial \textbf{q}}\bigg)_W=&\frac{n}{2^{n-1}}\frac{1}{2^m}\bigg\{\binom{m}{0}\bigg[\binom{n-1}{0}\textbf{q}^{n-1}\textbf{p}^m+\binom{n-1}{1}\textbf{q}^{n-2}\textbf{p}^m\textbf{q}+\\&\quad\quad\quad\quad\quad\quad\quad\quad\quad\quad\quad\quad\quad\;\;...+\binom{n-1}{n-1}\textbf{p}^m\textbf{q}^{n-1}\bigg]\\&\quad\quad\quad+\binom{m}{1}\bigg[\binom{n-1}{0}\textbf{q}^{n-1}\textbf{p}^m+\binom{n-1}{1}\textbf{q}^{n-2}\textbf{p}^m\textbf{q}+\\&\quad\quad\quad\quad\quad\quad\quad\quad\quad\quad\quad\quad\quad\;\;...+\binom{n-1}{n-1}\textbf{p}^m\textbf{q}^{n-1}\bigg]\\&\quad\quad\quad+...\\&\quad\quad\quad+\binom{m}{m}\bigg[\binom{n-1}{0}\textbf{q}^{n-1}\textbf{p}^m+\binom{n-1}{1}\textbf{q}^{n-2}\textbf{p}^m\textbf{q}+\\&\quad\quad\quad\quad\quad\quad\quad\quad\quad\quad\quad\quad\quad\;\;...+\binom{n-1}{n-1}\textbf{p}^m\textbf{q}^{n-1}\bigg]\bigg\}.
\end{split}
\end{equation}
Equation \eqref{j53} can be further simplified as
\begin{equation}\label{j54}
\bigg(\frac{\partial\textbf{T}_{m,n}}{\partial \textbf{q}}\bigg)_W=\frac{n}{2^{n-1}}\sum_{j=0}^{n-1}\binom{n-1}{j}\textbf{q}^j\textbf{p}^m\textbf{q}^{n-1-j},
\end{equation}
which can be written in terms of Weyl-ordered forms given by 
\begin{equation}\label{j55}
\bigg(\frac{\partial\textbf{T}_{m,n}}{\partial \textbf{q}}\bigg)_W=n\textbf{T}_{m,n-1}.
\end{equation}

Now let us apply this modified definition of derivative \eqref{j47} to the operator $\textbf{S}_{m,n}$. Doing so yields
\begin{equation}\label{j56}
\begin{split}
\bigg(\frac{\partial\textbf{S}_{m,n}}{\partial \textbf{p}}\bigg)_W=&\frac{m}{2^{m-1}}\frac{1}{2}\bigg\{\bigg[\binom{m-1}{0}\textbf{p}^{m-1}\textbf{q}^n+\binom{m-1}{1}\textbf{p}^{m-2}\textbf{q}^n\textbf{p}+\\&\quad\quad\quad\quad\quad\quad\quad\quad\quad\quad\quad...+\binom{m-1}{m-1}\textbf{q}^n\textbf{p}^{m-1}\bigg]\\&\quad\quad\quad+\bigg[\binom{m-1}{0}\textbf{p}^{m-1}\textbf{q}^n+\binom{m-1}{1}\textbf{p}^{m-2}\textbf{q}^n\textbf{p}+\\&\quad\quad\quad\quad\quad\quad\quad\quad\quad\quad\quad\;\;...+\binom{m-1}{m-1}\textbf{q}^n\textbf{p}^{m-1}\bigg]\bigg\}.
\end{split}
\end{equation}
Further simplification results in
\begin{equation}\label{j57}
\bigg(\frac{\partial\textbf{S}_{m,n}}{\partial \textbf{q}}\bigg)_W=\frac{m}{2^{m-1}}\sum_{j=0}^{m-1}\binom{m-1}{j}\textbf{p}^j\textbf{p}^n\textbf{p}^{m-1-j},
\end{equation}
which can again be written as a Weyl-ordered form. We have
\begin{equation}\label{j58}
\bigg(\frac{\partial\textbf{S}_{m,n}}{\partial \textbf{p}}\bigg)_W=m\textbf{T}_{m-1,n}.
\end{equation}
We now see a trend here that the modification in the derivative results in an operator that is similar to Weyl-ordered forms, even though we are doing it to basis operators corresponding to different quantizations. Performing this to all operators yields
\begin{equation}\label{j59}
\bigg(\frac{\partial\textbf{T}_{m,n}}{\partial \textbf{p}}\bigg)_W=\bigg(\frac{\partial\textbf{S}_{m,n}}{\partial \textbf{p}}\bigg)_W=\bigg(\frac{\partial\textbf{B}_{m,n}}{\partial \textbf{p}}\bigg)_W=m\textbf{T}_{m-1,n}
\end{equation}
and
\begin{equation}\label{j60}
\bigg(\frac{\partial\textbf{T}_{m,n}}{\partial \textbf{q}}\bigg)_W=\bigg(\frac{\partial\textbf{S}_{m,n}}{\partial \textbf{q}}\bigg)_W=\bigg(\frac{\partial\textbf{B}_{m,n}}{\partial \textbf{q}}\bigg)_W=n\textbf{T}_{m,n-1}.
\end{equation}
\\
\indent
Lastly, let us modify the derivative to have results that lean toward symmetric quantization. If we consider the differentiation rule given by
\begin{equation}\label{j61}
\bigg(\frac{\partial \textbf{y}}{\partial \textbf{y}_k}\bigg)_{S}=\frac{n}{2}\sum_{r=1}^s\delta_{l_r,k}(\delta_{l,1}+\delta_{l,m})\prod_{m=r+1}^s\textbf{y}_{l_m}\prod_{m=1}^{r-1}\textbf{y}_{l_m},
\end{equation}
the partial derivatives are given by
\begin{equation}\label{j62}
\begin{split}
\bigg(\frac{\partial\textbf{S}_{m,n}}{\partial \textbf{p}}\bigg)_S=&\frac{1}{2}\bigg\{\frac{1}{2}\bigg[\textbf{p}^{m-1}\textbf{q}^n+\textbf{q}^n\textbf{p}^{m-1}\bigg]+\frac{1}{2}\bigg[\textbf{p}^{m-1}\textbf{q}^n+\textbf{q}^n\textbf{p}^{m-1}\bigg]\bigg\}
\end{split}
\end{equation}
and 
\begin{equation}\label{j63}
\begin{split}
\bigg(\frac{\partial\textbf{S}_{m,n}}{\partial \textbf{q}}\bigg)_S=&\frac{1}{2}\bigg\{\frac{1}{2}\bigg[\textbf{q}^{n-1}\textbf{p}^m+\textbf{p}^m\textbf{q}^{n-1}\bigg]+\frac{1}{2}\bigg[\textbf{q}^{n-1}\textbf{p}^m+\textbf{p}^m\textbf{q}^{n-1}\bigg]\bigg\}.
\end{split}
\end{equation}
Equations \eqref{j62} and \eqref{j63} can be simplified as
\begin{equation}\label{j64}
\bigg(\frac{\partial\textbf{S}_{m,n}}{\partial \textbf{p}}\bigg)_S=m\textbf{S}_{m-1,n}
\end{equation}
and
\begin{equation}\label{j65}
\bigg(\frac{\partial\textbf{S}_{m,n}}{\partial \textbf{q}}\bigg)_S=n\textbf{S}_{m,n-1}.
\end{equation}
Doing so with other operators yield the following results
\begin{equation}\label{j66}
\bigg(\frac{\partial\textbf{T}_{m,n}}{\partial \textbf{p}}\bigg)_S=\bigg(\frac{\partial\textbf{S}_{m,n}}{\partial \textbf{p}}\bigg)_S=\bigg(\frac{\partial\textbf{B}_{m,n}}{\partial \textbf{p}}\bigg)_S=m\textbf{S}_{m-1,n}
\end{equation}
and
\begin{equation}\label{j67}
\bigg(\frac{\partial\textbf{T}_{m,n}}{\partial \textbf{q}}\bigg)_S=\bigg(\frac{\partial\textbf{S}_{m,n}}{\partial \textbf{q}}\bigg)_S=\bigg(\frac{\partial\textbf{B}_{m,n}}{\partial \textbf{q}}\bigg)_S=n\textbf{S}_{m,n-1}.
\end{equation}
Again, as we have shown before, redefining the original differential quotient of first type to \eqref{j61} leads to a preferred quantization. In this case, all resulting derivatives take the form of the basis operators in simplest symmetric quantization.
\\
\indent
When we compare the different definitions of the derivative, we can see that each modification has a resemblance to different basis operators. For instance in \eqref{j47}, we can see the factor $\frac{n}{2^{n-1}}$,taking into account the normalization factor along with the binomial coefficients $\binom{n-1}{l-1}$. This then leads to operators that resemble the Weyl-ordered forms. In the modification \eqref{j61}, the factors $\frac{n}{2}$, along with the Kronecker delta terms indicate that the resulting derivative will follow the basis operators in symmetric ordering. We have the normalizing factor $\frac{1}{2}$ and the Kronecker delta ensures the non-zero terms that are symmetrically ordered. We note that these modified differentiation rules cannot be generalized into arbitrary ordering, and that they are tailored to be consistent only in their respective basis. 
\section{The Limit Definition of the Derivative of Operators}\label{2.3}
After giving more clarity on the differential quotient of first type, we now discuss the differential quotient of second type, or the usual limit definition of the derivative. We have observed in Section \ref{2.2} that the first definition of derivative and its modifications cannot be used in arbitrarily-ordered operators. Here we investigate for the generality of the second type of derivative, along with some of its properties by presenting some theorems that involve this differentiation rule.

To start, let us state the limit definition of the derivative for classical observables.
Given a function $f=f(q,p)$, its partial derivative is defined as
\begin{equation}\label{fa1}
\frac{\partial f(q,p)}{\partial p}=\lim\limits_{\delta\rightarrow 0}\frac{f(q,p+\delta)-f(q,p)}{\delta}
\end{equation}
\textit{and}
\begin{equation}\label{fa2}
\frac{\partial f(q,p)}{\partial q}=\lim\limits_{\delta\rightarrow 0}\frac{f(q+\delta,p)-f(q,p)}{\delta}.
\end{equation}
Extending this to operators, we now present the limit definition of the differential quotient.
\begin{definition}
    For an operator $\textbf{F}=\textbf{F}(\textbf{q},\textbf{p})$, its partial derivative is defined as
    \begin{equation}\label{f1}
     \frac{\partial \textbf{F}(\textbf{q},\textbf{p})}{\partial \textbf{q}}=\lim\limits_{\delta\rightarrow 0}\frac{\textbf{F}(\textbf{q}+\delta\textbf{1},\textbf{p})-\textbf{F}(\textbf{q},\textbf{p})}{\delta}    
    \end{equation}
    and 
    \begin{equation}\label{f2}
     \frac{\partial \textbf{F}(\textbf{q},\textbf{p})}{\partial \textbf{p}}=\lim\limits_{\delta\rightarrow 0}\frac{\textbf{F}(\textbf{q},\textbf{p}+\delta\textbf{1})-\textbf{F}(\textbf{q},\textbf{p})}{\delta},  
    \end{equation}
    where $\textbf{1}$ is the identity operator.
\end{definition}
A well-known property of classical differentiation is linearity. We extend this property to operators in the following theorem.
\begin{theorem}\label{theorem0}
    The differential quotient of second type is linear.
    \begin{proof}
        Let $\textbf{F}=\textbf{F}(\textbf{q},\textbf{p})$ and $\textbf{G}=\textbf{G}(\textbf{q},\textbf{p})$. For scalars $a$ and $b$,
\begin{equation}\label{e01}
    \frac{\partial (a\textbf{F}+b\textbf{G})}{\partial \textbf{q}}=\lim\limits_{\delta\rightarrow 0}\frac{(a\textbf{F}+b\textbf{G})(\textbf{q}+\delta\textbf{1},\textbf{p})-(a\textbf{F}+b\textbf{G})(\textbf{q},\textbf{p})}{\delta}
\end{equation}
This can be rewritten as
\begin{equation}\label{e02}
    \frac{\partial (a\textbf{F}+b\textbf{G})}{\partial \textbf{q}}=\lim\limits_{\delta\rightarrow 0}\frac{a\textbf{F}(\textbf{q}+\delta\textbf{1},\textbf{p})+b\textbf{G}(\textbf{q}+\delta\textbf{1},\textbf{p})-a\textbf{F}(\textbf{q},\textbf{p})-b\textbf{G}(\textbf{q},\textbf{p})}{\delta}.
\end{equation}
Rearranging the terms and factoring out the constants, we obtain
\begin{equation}\label{e03}
    \frac{\partial (a\textbf{F}+b\textbf{G})}{\partial \textbf{q}}=a\lim\limits_{\delta\rightarrow 0}\frac{\textbf{F}(\textbf{q}+\delta\textbf{1},\textbf{p})-\textbf{F}(\textbf{q},\textbf{p})}{\delta} + b\lim\limits_{\delta\rightarrow 0}\frac{\textbf{G}(\textbf{q}+\delta\textbf{1},\textbf{p})-\textbf{G}(\textbf{q},\textbf{p})}{\delta}
\end{equation}
which simplifies to 
\begin{equation}\label{e04}
    \frac{\partial (a\textbf{F}+b\textbf{G})}{\partial \textbf{q}}=a\frac{\partial\textbf{F}}{\partial \textbf{q}}+ b\frac{\partial\textbf{G}}{\partial \textbf{q}}.
\end{equation}
Similarly,
\begin{equation}\label{e05}
    \frac{\partial (a\textbf{F}+b\textbf{G})}{\partial \textbf{p}}=a\frac{\partial\textbf{F}}{\partial \textbf{p}}+ b\frac{\partial\textbf{G}}{\partial \textbf{p}}.
\end{equation}
    \end{proof}
\end{theorem}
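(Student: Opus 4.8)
The plan is to prove linearity directly from the limit definition in \eqref{f1}--\eqref{f2}, mirroring the familiar classical argument \eqref{fa1}--\eqref{fa2}. I would fix operators $\textbf{F}=\textbf{F}(\textbf{q},\textbf{p})$ and $\textbf{G}=\textbf{G}(\textbf{q},\textbf{p})$ and scalars $a,b$, and begin from the left-hand side $\partial(a\textbf{F}+b\textbf{G})/\partial\textbf{q}$. Applying Definition to the combined operator $a\textbf{F}+b\textbf{G}$ produces a single difference quotient in which the first argument is shifted to $\textbf{q}+\delta\textbf{1}$. The entire computation is then a matter of manipulating this one limit into two.

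The key observation on which everything rests is that evaluating a linear combination at shifted arguments distributes, namely $(a\textbf{F}+b\textbf{G})(\textbf{q}+\delta\textbf{1},\textbf{p})=a\textbf{F}(\textbf{q}+\delta\textbf{1},\textbf{p})+b\textbf{G}(\textbf{q}+\delta\textbf{1},\textbf{p})$. This holds because the substitution $\textbf{q}\mapsto\textbf{q}+\delta\textbf{1}$ acts on each operator-valued function before the scalar weights are attached. After inserting this identity into the numerator, I would group the $\textbf{F}$-terms and the $\textbf{G}$-terms separately, factor out the constants $a$ and $b$, and recognize each resulting quotient as the defining expression for $\partial\textbf{F}/\partial\textbf{q}$ and $\partial\textbf{G}/\partial\textbf{q}$ respectively.

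The only step that genuinely requires care --- and thus the main, if mild, obstacle --- is the splitting of a single limit into a sum of two limits with the scalars pulled out front. This manipulation is the operator analogue of the sum and scalar-multiple rules for limits, and it is legitimate precisely when the two individual limits exist. Hence the argument tacitly presupposes that $\textbf{F}$ and $\textbf{G}$ are each differentiable in the sense of Definition; granting this, both pieces converge and one obtains $\partial(a\textbf{F}+b\textbf{G})/\partial\textbf{q}=a\,\partial\textbf{F}/\partial\textbf{q}+b\,\partial\textbf{G}/\partial\textbf{q}$. Finally, I would observe that the $\textbf{p}$-derivative is handled by the identical chain of steps, with the shift placed in the second argument via \eqref{f2}, so that the companion identity for $\partial(a\textbf{F}+b\textbf{G})/\partial\textbf{p}$ follows without repeating the work.
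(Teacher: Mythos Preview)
Your proposal is correct and follows essentially the same route as the paper: apply the limit definition to $a\textbf{F}+b\textbf{G}$, distribute the shifted evaluation over the linear combination, regroup and factor out the scalars, recognize the two individual difference quotients, and then remark that the $\textbf{p}$-case is identical. The only addition you make beyond the paper is the explicit caveat that splitting the limit presupposes differentiability of $\textbf{F}$ and $\textbf{G}$ separately, which the paper leaves implicit.
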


After establishing the linearity of the differential quotient, we present the following theorems that can be generally applied for arbitrarily-quantized observables.
\begin{theorem}\label{theorem1}
Let $f(q,p)=p^mq^n$ with its arbitrarily-quantized quantum image $\textbf{F}(\textbf{q},\textbf{p})=\textbf{A}_{m,n}$. The partial derivative is given by
\begin{equation}\label{d1}
\frac{\partial f(q,p)}{\partial p}=mp^{m-1}q^n\rightarrow\frac{\partial \textbf{F}(\textbf{q},\textbf{p})}{\partial \textbf{p}}=m\textbf{A}_{m-1,n}
\end{equation}
\textit{and}
\begin{equation}\label{e1}
\frac{\partial f(q,p)}{\partial q}=np^mq^{n-1}\rightarrow\frac{\partial \textbf{F}(\textbf{q},\textbf{p})}{\partial \textbf{q}}=n\textbf{A}_{m,n-1}.
\end{equation}
\end{theorem}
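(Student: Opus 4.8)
The plan is to reduce the statement to the differentiation of a single ordered monomial by invoking the linearity of the second-type quotient (Theorem \ref{theorem0}), and then to exploit the fact that the shift $\delta\textbf{1}$ appearing in the limit definitions \eqref{f1} and \eqref{f2} is a scalar multiple of the identity and therefore commutes with both $\textbf{q}$ and $\textbf{p}$. Concretely, I would write $\textbf{A}_{m,n}$ in its first form $\sum_{j=0}^n a_j^{(n)}\textbf{q}^j\textbf{p}^m\textbf{q}^{n-j}$ for the $\textbf{p}$-derivative and in its second form $\sum_{j=0}^m b_j^{(m)}\textbf{p}^j\textbf{q}^n\textbf{p}^{m-j}$ for the $\textbf{q}$-derivative, so that in each case the variable being differentiated sits in a single contiguous block and the ambient ordering is left intact.

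For the $\textbf{p}$-derivative I would first compute the quotient on one term $\textbf{q}^j\textbf{p}^m\textbf{q}^{n-j}$. Substituting $\textbf{p}\to\textbf{p}+\delta\textbf{1}$ and using that $\delta\textbf{1}$ commutes with $\textbf{p}$, the ordinary binomial theorem applies to give $(\textbf{p}+\delta\textbf{1})^m=\sum_{k=0}^m\binom{m}{k}\delta^{m-k}\textbf{p}^k$. The $k=m$ term cancels the unshifted $\textbf{p}^m$ in the numerator, leaving $\textbf{q}^j\big(\sum_{k=0}^{m-1}\binom{m}{k}\delta^{m-k}\textbf{p}^k\big)\textbf{q}^{n-j}$. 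Dividing by $\delta$ and letting $\delta\to0$, every term with $m-k\geq 2$ vanishes and only the $k=m-1$ term survives, yielding $\binom{m}{m-1}\textbf{q}^j\textbf{p}^{m-1}\textbf{q}^{n-j}=m\,\textbf{q}^j\textbf{p}^{m-1}\textbf{q}^{n-j}$. Summing over $j$ with the weights $a_j^{(n)}$ and invoking linearity then gives $\partial\textbf{A}_{m,n}/\partial\textbf{p}=m\sum_{j=0}^n a_j^{(n)}\textbf{q}^j\textbf{p}^{m-1}\textbf{q}^{n-j}$.

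The $\textbf{q}$-derivative is entirely symmetric: starting from the second form, the same central-shift and binomial argument applied to the block $\textbf{q}^n$ produces $\partial(\textbf{p}^j\textbf{q}^n\textbf{p}^{m-j})/\partial\textbf{q}=n\,\textbf{p}^j\textbf{q}^{n-1}\textbf{p}^{m-j}$, and summing against $b_j^{(m)}$ gives $n\sum_{j=0}^m b_j^{(m)}\textbf{p}^j\textbf{q}^{n-1}\textbf{p}^{m-j}$. The step that deserves the most care — and where the conclusion really lives — is recognizing these final sums as the correctly-ordered images $m\textbf{A}_{m-1,n}$ and $n\textbf{A}_{m,n-1}$. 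This rests on the convention, built into the ordering rule of \eqref{b3}, that the weights $a_j^{(n)}$ depend only on the exponent $n$ of $\textbf{q}$ and the weights $b_j^{(m)}$ only on the exponent $m$ of $\textbf{p}$; hence lowering the $\textbf{p}$-power from $m$ to $m-1$ leaves each $a_j^{(n)}$ untouched, so that the resulting operator is precisely the same-ordering quantization of $p^{m-1}q^n$, and analogously for the $\textbf{q}$-derivative. I expect this bookkeeping about which exponent the coefficients track — rather than the limit computation itself — to be the main conceptual obstacle, since it is exactly what guarantees that the second-type quotient preserves the chosen ordering scheme and so behaves like ordinary partial differentiation.
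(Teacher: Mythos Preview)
Your proposal is correct and follows essentially the same route as the paper's proof: choose the form of $\textbf{A}_{m,n}$ that isolates the differentiated variable in a single block, apply the limit definition with the binomial expansion of $(\textbf{p}+\delta\textbf{1})^m$ (respectively $(\textbf{q}+\delta\textbf{1})^n$), and read off the surviving linear-in-$\delta$ term. Your explicit remark that the identification of the result with $m\textbf{A}_{m-1,n}$ rests on the weights $a_j^{(n)}$ depending only on $n$ (and $b_j^{(m)}$ only on $m$) is a point the paper passes over silently, but otherwise the arguments coincide.
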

\begin{proof}
We start with the operators
\begin{equation}\label{d2}
\textbf{F}(\textbf{q},\textbf{p})=\textbf{A}_{m,n}=\sum_{j=0}^na^{(n)}_j\textbf{q}^j\textbf{p}^m\textbf{q}^{n-j}
\end{equation}
where $a_j^{(n)}$ depends on the ordering rule.
Extending the definition of derivative to the operators yield
\begin{equation}\label{d4}
\begin{split}
\frac{\partial \textbf{F}(\textbf{q},\textbf{p})}{\partial \textbf{p}}=\lim\limits_{\delta\rightarrow 0}\sum_{j=0}^na_j^{(n)}\frac{\big[\textbf{q}^j(\textbf{p}+\delta\textbf{1})^m\textbf{q}^{n-j}-\textbf{q}^j\textbf{p}^m\textbf{q}^{n-j}\big]}{\delta},
\end{split}
\end{equation}
where we used the linearity of differentiation to factor out the constant $a_j^{(n)}$. Upon expanding the factor $(\textbf{p}+\delta \textbf{1})^m$ in the numerator, we obtain
\begin{equation}\label{d5}
\frac{\partial \textbf{F}(\textbf{q},\textbf{p})}{\partial \textbf{p}}=\lim\limits_{\delta\rightarrow 0}\sum_{j=0}^na_j^{(n)}\frac{\textbf{q}^j\big[\sum_{k=0}^m\binom{m}{k}\textbf{p}^{m-k}\delta^k\big]\textbf{q}^{n-j}-\textbf{q}^j\textbf{p}^m\textbf{q}^{n-j}}{\delta},
\end{equation}
in which the non-vanishing terms are for all values of $j$ and for $k=0, 1$. This gives us
\begin{equation}\label{d6}
\frac{\partial \textbf{F}(\textbf{q},\textbf{p})}{\partial \textbf{p}}=m\sum_{j=0}^na_j^{(n)}\textbf{q}^j\textbf{p}^{m-1}\textbf{q}^{n-j},
\end{equation}
which can be written as
\begin{equation}\label{d7}
\frac{\partial \textbf{F}(\textbf{q},\textbf{p})}{\partial \textbf{p}}=m\textbf{A}_{m-1,n},
\end{equation}
Next, we use another form of the operator
\begin{equation}\label{e2}
\textbf{F}(\textbf{q},\textbf{p})=\textbf{A}_{m,n}=\sum_{j=0}^mb_j^{(m)}\textbf{p}^j\textbf{q}^m\textbf{p}^{m-j}
\end{equation}
where $b_j^{(m)}$ depends on the ordering rule being used, to obtain the partial derivative with respect to $\textbf{q}$. Using the limit definition of the derivative,
we have
\begin{equation}\label{e4}
\frac{\partial \textbf{F}(\textbf{q},\textbf{p})}{\partial \textbf{q}}=\lim\limits_{\delta\rightarrow 0}\sum_{j=0}^mb_j^{(m)}\frac{\big[\textbf{p}^j(\textbf{q}+\delta\textbf{1})^n\textbf{p}^{m-j}-\textbf{p}^j\textbf{q}^n\textbf{p}^{m-j}\big]}{\delta}
\end{equation}
in which the term $(\textbf{q}+\delta\textbf{1})^n$ can be further expanded, yielding
\begin{equation}\label{e17}
\frac{\partial \textbf{F}(\textbf{q},\textbf{p})}{\partial \textbf{q}}=\lim\limits_{\delta\rightarrow 0}\sum_{j=0}^mb_j^{(m)}\frac{\big[\textbf{p}^j\sum_{k=0}^n\binom{n}{k}\textbf{q}^{n-k}\delta^{k}\textbf{p}^{m-j}-\textbf{p}^j\textbf{q}^n\textbf{p}^{m-j}\big]}{\delta}
\end{equation}
where the non-vanishing terms are for all values of $j$ and $k=0,1$. Further simplification leads to
\begin{equation}\label{e5}
\frac{\partial \textbf{F}(\textbf{q},\textbf{p})}{\partial \textbf{q}}=n\sum_{j=0}^mb_j^{(m)}\textbf{p}^j\textbf{q}^{n-1}\textbf{p}^{m-j},
\end{equation}
which can be written as
\begin{equation}\label{e6}
\frac{\partial \textbf{F}(\textbf{q},\textbf{p})}{\partial \textbf{q}}=n\textbf{A}_{m,n-1}.
\end{equation}
\end{proof}
As we can see from the previous theorem, the differential quotient of second type works on observables obtained via any quantization rule. That is, there is a derivative with a well-defined value for operators obtained using Weyl, symmetric, or Born-Jordan ordering. However, this theorem only shows the value of the derivative for positive powers of position and momentum. A question now arises if this result is also compatible for operators with negative indices. We tackle this problem in the following theorem.
\begin{theorem}\label{theorem2}
Let $f(q,p)=p^{-m}q^n$ with its arbitrarily-quantized quantum image $\textbf{F}(\textbf{q},\textbf{p})=\textbf{A}_{-m,n}$. The partial derivative with respect to the momentum is given by
\begin{equation}\label{f3}
\frac{\partial f(q,p)}{\partial p}=-mp^{-m-1}q^n\rightarrow\frac{\partial \textbf{F}(\textbf{q},\textbf{p})}{\partial \textbf{p}}=-m\textbf{A}_{-m-1,n}
\end{equation}
\end{theorem}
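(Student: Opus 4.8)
The plan is to mirror the proof of Theorem \ref{theorem1} but to replace the ordinary binomial expansion of $(\textbf{p}+\delta\textbf{1})^m$ with its analog for negative exponents. First I would write the arbitrarily-quantized image in the form
\[
\textbf{A}_{-m,n}=\sum_{j=0}^na_j^{(n)}\textbf{q}^j\textbf{p}^{-m}\textbf{q}^{n-j},
\]
apply the limit definition \eqref{f2}, and invoke the linearity established in Theorem \ref{theorem0} to pull the summation and the coefficients $a_j^{(n)}$ outside the limit. Since the position factors $\textbf{q}^j$ and $\textbf{q}^{n-j}$ flank the momentum block and are untouched by differentiation in $\textbf{p}$, this reduces the problem to evaluating, for each $j$,
\[
\lim_{\delta\to0}\frac{\textbf{q}^j\big[(\textbf{p}+\delta\textbf{1})^{-m}-\textbf{p}^{-m}\big]\textbf{q}^{n-j}}{\delta}.
\]

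The crucial step is the expansion of $(\textbf{p}+\delta\textbf{1})^{-m}$. Because $\textbf{p}$ commutes with the identity $\textbf{1}$, the two operators behave exactly like commuting scalars inside this single expression, so I would apply the generalized Newton binomial series
\[
(\textbf{p}+\delta\textbf{1})^{-m}=\sum_{k=0}^\infty\binom{-m}{k}\textbf{p}^{-m-k}\delta^k,
\]
where $\binom{-m}{k}=\tfrac{(-m)(-m-1)\cdots(-m-k+1)}{k!}$. Subtracting $\textbf{p}^{-m}$ removes the $k=0$ term, and after dividing by $\delta$ only the $k=1$ contribution survives the limit, every $k\geq 2$ term carrying a positive power of $\delta$. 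Since $\binom{-m}{1}=-m$, this leaves $-m\,\textbf{q}^j\textbf{p}^{-m-1}\textbf{q}^{n-j}$ for each $j$, in exact parallel to how the $k=1$ term produced the factor $m$ in the positive-power case.

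Reassembling the sum then yields
\[
\frac{\partial\textbf{F}(\textbf{q},\textbf{p})}{\partial\textbf{p}}=-m\sum_{j=0}^na_j^{(n)}\textbf{q}^j\textbf{p}^{-m-1}\textbf{q}^{n-j}=-m\textbf{A}_{-m-1,n},
\]
which is the claimed identity. The main obstacle I anticipate is justifying the negative-exponent expansion at the operator level: one must either argue that the commutativity of $\textbf{p}$ with $\textbf{1}$ legitimizes the termwise series, or confront the invertibility and domain of $\textbf{p}^{-m}$, since the momentum operator has $0$ in its spectrum and $\textbf{p}^{-m}$ is only densely defined. To match the formal computational spirit of Theorem \ref{theorem1}, I expect the commutativity argument to be the intended route, treating the expansion as a power series in $\delta$ acting on a suitable dense domain that avoids the singularity of $\textbf{p}^{-m}$.
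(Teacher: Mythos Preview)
Your proposal is correct and reaches the same conclusion, but the route you take through the negative-exponent block differs from the paper's. You expand $(\textbf{p}+\delta\textbf{1})^{-m}$ directly via the generalized Newton series $\sum_{k\ge 0}\binom{-m}{k}\textbf{p}^{-m-k}\delta^k$, an infinite sum whose $k=1$ term delivers the factor $-m$. The paper instead avoids any infinite series by the algebraic factoring
\[
(\textbf{p}+\delta\textbf{1})^{-m}-\textbf{p}^{-m}=\big[\textbf{p}^m-(\textbf{p}+\delta\textbf{1})^m\big]\textbf{p}^{-m}(\textbf{p}+\delta\textbf{1})^{-m},
\]
which converts the problem into a \emph{finite} binomial expansion of $(\textbf{p}+\delta\textbf{1})^m$; after cancellation of the $k=0$ term and evaluation of the limit, the surviving $k=1$ contribution is $-m\textbf{p}^{m-1}\cdot\textbf{p}^{-m}\cdot\textbf{p}^{-m}=-m\textbf{p}^{-m-1}$. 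Your approach is more transparent and mirrors the classical computation term-for-term, but it carries the burden you yourself flag: one must justify the convergence of an operator power series in $\delta$. The paper's factoring trick sidesteps that entirely, needing only the existence of the inverses $\textbf{p}^{-m}$ and $(\textbf{p}+\delta\textbf{1})^{-m}$ and a finite polynomial identity, so it is the cleaner choice at the level of rigor the paper maintains.
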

\begin{proof}
We start with the operator
\begin{equation}\label{f4}
\textbf{F}(\textbf{q},\textbf{p})=\textbf{A}_{-m,n}=\sum_{j=0}^na_j^{(n)}\textbf{q}^j\textbf{p}^{-m}\textbf{q}^{n-j}
\end{equation}
where $a_j^{(n)}$ depends on the ordering rule.
Using the definition of partial derivative and extending it to operators, we obtain
\begin{equation}\label{f5}
\begin{split}
\frac{\partial \textbf{F}(\textbf{q},\textbf{p})}{\partial \textbf{p}}=\lim\limits_{\delta\rightarrow 0}\sum_{j=0}^na_j^{(n)}\frac{\big[\textbf{q}^j(\textbf{p}+\delta\textbf{1})^{-m}\textbf{q}^{n-j}-\textbf{q}^j\textbf{p}^{-m}\textbf{q}^{n-j}\big]}{\delta}.
\end{split}
\end{equation}
where the terms in the numerator can be factored such that
\begin{equation}\label{f6}
\frac{\partial \textbf{F}(\textbf{q},\textbf{p})}{\partial \textbf{p}}=\lim\limits_{\delta\rightarrow 0}\sum_{j=0}^na_j^{(n)}\frac{\textbf{q}^j\big[(\textbf{p}+\delta\textbf{1})^{-m}-\textbf{p}^{-m}\big]\textbf{q}^{n-j}}{\delta}.
\end{equation}
Equation \eqref{f6} is easier to evaluate when it is expressed as follows
\begin{equation}\label{f7}
\frac{\partial \textbf{F}(\textbf{q},\textbf{p})}{\partial \textbf{p}}=\lim\limits_{\delta\rightarrow 0}\sum_{j=0}^n\frac{a_j^{(n)}}{\delta}\textbf{q}^j\bigg\{\bigg[\textbf{p}^{m}-(\textbf{p}+\delta\textbf{1})^{m}\bigg]\textbf{p}^{-m}(\textbf{p}+\delta\textbf{1})^{-m}\bigg\}\textbf{q}^{n-j}.
\end{equation}
Upon expanding the factor $(\textbf{p}+\delta\textbf{1})^m$, we have
\begin{equation}\label{f8}
\frac{\partial \textbf{F}(\textbf{q},\textbf{p})}{\partial \textbf{p}}=\lim\limits_{\delta\rightarrow 0}\sum_{j=0}^n\frac{a_j^{(n)}}{\delta}\textbf{q}^j\bigg\{\bigg[\textbf{p}^{m}-\sum_{k=0}^m\binom{m}{k}\textbf{p}^{m-k}\delta^k\bigg]\textbf{p}^{-m}(\textbf{p}+\delta\textbf{1})^{-m}\bigg\}\textbf{q}^{n-j},
\end{equation}
where the term $k=0$ in the bracket cancels out. This gives us
\begin{equation}\label{f9}
\frac{\partial \textbf{F}(\textbf{q},\textbf{p})}{\partial \textbf{p}}=\lim\limits_{\delta\rightarrow 0}\sum_{j=0}^n\frac{a_j^{(n)}}{\delta}\textbf{q}^j\bigg\{\bigg[-\sum_{k=1}^m\binom{m}{k}\textbf{p}^{m-k}\delta^k\bigg]\textbf{p}^{-m}(\textbf{p}+\delta\textbf{1})^{-m}\bigg\}\textbf{q}^{n-j}
\end{equation}
in which the non-vanishing terms are for all values of $j$ and for $k=1$ when the limit is evaluated. Therefore, we have
\begin{equation}\label{f10}
\frac{\partial \textbf{F}(\textbf{q},\textbf{p})}{\partial \textbf{p}}=-m\sum_{j=0}^na_j^{(n)}\textbf{q}^j\textbf{p}^{-m-1}\textbf{q}^{n-j}
\end{equation}
which can be written as
\begin{equation}\label{f11}
\frac{\partial \textbf{F}(\textbf{q},\textbf{p})}{\partial \textbf{p}}=-m\textbf{A}_{-m-1,n}.
\end{equation}
\end{proof}
\begin{theorem}\label{theorem3}
Let $f(q,p)=p^{m}q^{-n}$ with its arbitrarily-quantized quantum image $\textbf{F}(\textbf{q},\textbf{p})=\textbf{A}_{m,-n}$. The partial derivative with respect to the position is given by
\begin{equation}\label{f12}
\frac{\partial f(q,p)}{\partial q}=-np^{m}q^{-n-1}\rightarrow\frac{\partial \textbf{F}(\textbf{q},\textbf{p})}{\partial \textbf{q}}=-n\textbf{A}_{m,-n-1}.
\end{equation}
\end{theorem}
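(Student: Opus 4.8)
The plan is to mirror the proof of Theorem~\ref{theorem2} with the roles of position and momentum interchanged, since here the negative power sits on $q$ and we differentiate with respect to $\textbf{q}$. First I would start from the form of the arbitrarily-quantized image that isolates the negative power in the middle,
\begin{equation}
\textbf{F}(\textbf{q},\textbf{p})=\textbf{A}_{m,-n}=\sum_{j=0}^m b_j^{(m)}\textbf{p}^j\textbf{q}^{-n}\textbf{p}^{m-j},
\end{equation}
with the coefficients $b_j^{(m)}$ (summing to $1$) fixed by the ordering rule. Applying the limit definition \eqref{f1} and using linearity (Theorem~\ref{theorem0}) to pull out each $b_j^{(m)}$, the outer factors $\textbf{p}^j$ and $\textbf{p}^{m-j}$ are untouched, so the whole difference quotient collapses to evaluating $\lim_{\delta\to0}\delta^{-1}\big[(\textbf{q}+\delta\textbf{1})^{-n}-\textbf{q}^{-n}\big]$ sandwiched between them.

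The key step, exactly as in equation \eqref{f7} of the previous proof, is to rewrite the difference of negative powers. Because $\textbf{q}$ and $\textbf{q}+\delta\textbf{1}$ are both functions of $\textbf{q}$ alone, they commute, hence all their powers and inverses commute and I may use the scalar-style identity
\begin{equation}
(\textbf{q}+\delta\textbf{1})^{-n}-\textbf{q}^{-n}=\big[\textbf{q}^{n}-(\textbf{q}+\delta\textbf{1})^{n}\big]\,\textbf{q}^{-n}(\textbf{q}+\delta\textbf{1})^{-n}.
\end{equation}
Expanding $(\textbf{q}+\delta\textbf{1})^{n}=\sum_{k=0}^n\binom{n}{k}\textbf{q}^{n-k}\delta^{k}$, the $k=0$ term cancels the leading $\textbf{q}^{n}$, so the bracket is of order $\delta$; after dividing by $\delta$ and letting $\delta\to0$, only the $k=1$ term survives, contributing the factor $-n\textbf{q}^{n-1}$, while $\textbf{q}^{-n}(\textbf{q}+\delta\textbf{1})^{-n}\to\textbf{q}^{-2n}$.

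Collecting these, $\textbf{q}^{n-1}\textbf{q}^{-2n}=\textbf{q}^{-n-1}$, so the central limit equals $-n\textbf{q}^{-n-1}$ and
\begin{equation}
\frac{\partial \textbf{F}(\textbf{q},\textbf{p})}{\partial \textbf{q}}=-n\sum_{j=0}^m b_j^{(m)}\textbf{p}^j\textbf{q}^{-n-1}\textbf{p}^{m-j}=-n\textbf{A}_{m,-n-1},
\end{equation}
as claimed. The computation is essentially routine precisely because every object in the central factor is a function of $\textbf{q}$ alone, so no ordering ambiguity ever intervenes; the only point needing care — and the closest thing to an obstacle — is justifying that the inverses $\textbf{q}^{-n}$ and $(\textbf{q}+\delta\textbf{1})^{-n}$ are well defined and that the limit may be taken termwise, which is what licenses the commuting identity and the truncation of the binomial expansion to the $k=1$ term.
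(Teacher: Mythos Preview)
Your proposal is correct and follows essentially the same route as the paper's own proof: start from the $b_j^{(m)}$-form of $\textbf{A}_{m,-n}$, apply the limit definition, rewrite $(\textbf{q}+\delta\textbf{1})^{-n}-\textbf{q}^{-n}$ via the commuting identity $[\textbf{q}^{n}-(\textbf{q}+\delta\textbf{1})^{n}]\textbf{q}^{-n}(\textbf{q}+\delta\textbf{1})^{-n}$, binomially expand, and keep only the $k=1$ term after the limit. Your added remarks on why the commuting identity is legitimate and on the existence of the inverses are a welcome bit of extra care, but the argument is otherwise step-for-step the paper's.
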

\begin{proof}
Starting with the operator given by
\begin{equation}\label{f13}
\textbf{F}(\textbf{q},\textbf{p})=\textbf{A}_{m,-n}=\sum_{j=0}^mb_j^{(m)}\textbf{p}^j\textbf{q}^{-n}\textbf{p}^{m-j}
\end{equation}
where $b_j^{(m)}$ depends on the ordering rule, along with the limit definition of the derivative, we arrive at
\begin{equation}\label{f14}
\begin{split}
\frac{\partial \textbf{F}(\textbf{q},\textbf{p})}{\partial \textbf{q}}=\lim\limits_{\delta\rightarrow 0}\sum_{j=0}^mb_j^{(m)}\frac{\big[\textbf{p}^j(\textbf{q}+\delta\textbf{1})^{-n}\textbf{p}^{m-j}-\textbf{p}^j\textbf{q}^{-n}\textbf{p}^{m-j}\big]}{\delta}
\end{split}
\end{equation}
which can be written as
\begin{equation}\label{f15}
\frac{\partial \textbf{F}(\textbf{q},\textbf{p})}{\partial \textbf{q}}=\lim\limits_{\delta\rightarrow 0}\sum_{j=0}^mb_j^{(m)}\frac{\textbf{p}^j\big[(\textbf{q}+\delta\textbf{1})^{-n}-\textbf{q}^{-n}\big]\textbf{p}^{m-j}}{\delta}.
\end{equation}
Further expansion of equation \eqref{f15} leads to
\begin{equation}\label{f16}
\frac{\partial \textbf{F}(\textbf{q},\textbf{p})}{\partial \textbf{q}}=\lim\limits_{\delta\rightarrow 0}\sum_{j=0}^m\frac{b_j^{(m)}}{\delta}\textbf{p}^j\bigg\{\bigg[\textbf{q}^n-(\textbf{q}+\delta\textbf{1})^n\bigg]\textbf{q}^{-n}(\textbf{q}+\delta\textbf{1})^{-n}\bigg\}\textbf{p}^{m-j},
\end{equation}
where the terms inside the bracket can be simplified. It follows that
\begin{equation}\label{f17}
\frac{\partial \textbf{F}(\textbf{q},\textbf{p})}{\partial \textbf{q}}=\lim\limits_{\delta\rightarrow 0}\sum_{j=0}^m\frac{b_j^{(m)}}{\delta}\textbf{p}^j\bigg\{\bigg[-\sum_{k=1}^n\binom{n}{k}\textbf{q}^{n-k}\delta^k\bigg]\textbf{q}^{-n}(\textbf{q}+\delta\textbf{1})^{-n}\bigg\}\textbf{p}^{m-j}
\end{equation}
where the non-vanishing terms are for all values of $j$ and for $k=1$. We arrive at
\begin{equation}\label{f18}
\frac{\partial \textbf{F}(\textbf{q},\textbf{p})}{\partial \textbf{q}}=-n\sum_{j=0}^mb_j^{(m)}\textbf{p}^j\textbf{q}^{-n-1}\textbf{p}^{m-j},
\end{equation}
which can be expressed as
\begin{equation}\label{f19}
\frac{\partial \textbf{F}(\textbf{q},\textbf{p})}{\partial \textbf{q}}=-n\textbf{A}_{m,-n-1}.
\end{equation}
\end{proof}
At this point we can see that no specific quantization is preferred by the differential quotient of second type. Let us recall the first definition of the derivative and its modification for the different quantizations in Section \ref{2.2} given by 
\begin{equation}\label{add21}
\begin{split}
    &\bigg(\frac{\partial \textbf{T}_{m,n}}{\partial \textbf{q}}\bigg)_W=n\textbf{T}_{m,n-1} \quad\quad\text{and}\quad\quad\bigg(\frac{\partial \textbf{T}_{m,n}}{\partial \textbf{p}}\bigg)_W=m\textbf{T}_{m-1,n},
    \\&\bigg(\frac{\partial \textbf{S}_{m,n}}{\partial \textbf{q}}\bigg)_S\;\;=n\textbf{S}_{m,n-1} \quad\quad\text{and}\quad\quad\bigg(\frac{\partial \textbf{S}_{m,n}}{\partial \textbf{p}}\bigg)_S\;\;=m\textbf{S}_{m-1,n},
    \\&\bigg(\frac{\partial \textbf{B}_{m,n}}{\partial \textbf{q}}\bigg)_1\;\;=n\textbf{B}_{m,n-1} \quad\quad\text{and}\quad\quad\bigg(\frac{\partial \textbf{B}_{m,n}}{\partial \textbf{p}}\bigg)_1\;\;=m\textbf{B}_{m-1,n}.
\end{split}
\end{equation}
Noting that the results of Theorem \ref{theorem1} apply for arbitrarily-ordered operators, we can observe the consistency of the differential quotient of first type and the differential quotient of second type. That is,
\begin{equation}\label{add22}
\begin{split}
    &\bigg(\frac{\partial \textbf{T}_{m,n}}{\partial \textbf{q}}\bigg)_W=\bigg(\frac{\partial \textbf{T}_{m,n}}{\partial \textbf{q}}\bigg)_2 \quad\quad\text{and}\quad\quad\bigg(\frac{\partial \textbf{T}_{m,n}}{\partial \textbf{p}}\bigg)_W=\bigg(\frac{\partial \textbf{T}_{m,n}}{\partial \textbf{p}}\bigg)_2
\end{split}
\end{equation}
for the Weyl quantization, then we have
\begin{equation}\label{add23}
\begin{split}
    &\bigg(\frac{\partial \textbf{S}_{m,n}}{\partial \textbf{q}}\bigg)_S=\bigg(\frac{\partial \textbf{S}_{m,n}}{\partial \textbf{q}}\bigg)_2 \quad\quad\text{and}\quad\quad\bigg(\frac{\partial \textbf{S}_{m,n}}{\partial \textbf{p}}\bigg)_S=\bigg(\frac{\partial \textbf{S}_{m,n}}{\partial \textbf{p}}\bigg)_2
\end{split}
\end{equation}
for the simplest symmetric quantization, and
\begin{equation}\label{add24}
\begin{split}
    &\bigg(\frac{\partial \textbf{B}_{m,n}}{\partial \textbf{q}}\bigg)_1=\bigg(\frac{\partial \textbf{B}_{m,n}}{\partial \textbf{q}}\bigg)_2 \quad\quad\text{and}\quad\quad\bigg(\frac{\partial \textbf{B}_{m,n}}{\partial \textbf{p}}\bigg)_1=\bigg(\frac{\partial \textbf{B}_{m,n}}{\partial \textbf{p}}\bigg)_2
\end{split}
\end{equation}
for the Born-Jordan quantization. What we have shown here is that the condition imposed in \cite{bj1926b} regarding the equivalence of the two definitions of the derivative can be achieved, not just for Born-Jordan quantization but also for the Weyl and simplest symmetric. Recalling the modifications, the differential quotient of first type is given by
\begin{equation}\label{aj47}
\bigg(\frac{\partial \textbf{y}}{\partial \textbf{y}_k}\bigg)_{W}=\frac{n}{2^{n-1}}\sum_{r=1}^s\delta_{l_r,k}\binom{n-1}{l-1}\prod_{m=r+1}^s\textbf{y}_{l_m}\prod_{m=1}^{r-1}\textbf{y}_{l_m}
\end{equation}
for the Weyl basis where
\begin{equation}\label{aj48}
n=\sum_{r=1}^s\delta_{l_r,k}
\end{equation}
and
\begin{equation}\label{aj49}
l=\sum_{t=1}^r\delta_{l_t,k}.
\end{equation}
Then we have
\begin{equation}\label{aj61}
\bigg(\frac{\partial \textbf{y}}{\partial \textbf{y}_k}\bigg)_{S}=\frac{n}{2}\sum_{r=1}^s\delta_{l_r,k}(\delta_{l,1}+\delta_{l,m})\prod_{m=r+1}^s\textbf{y}_{l_m}\prod_{m=1}^{r-1}\textbf{y}_{l_m}
\end{equation}
for the symmetric basis, and
\begin{equation}\label{aj2}
\bigg(\frac{\partial\textbf{y}}{\partial \textbf{y}_k}\bigg)_1=\sum_{r=1}^s\delta_{l_r,k}\prod_{m=r+1}^s\textbf{y}_{l_m}\prod_{m=1}^{r-1}\textbf{y}_{l_m}.
\end{equation}
for the Born-Jordan basis.
\\
\indent
Now that we have shown some theorems on the derivative of operators using the limit definition and its consistency with the differential quotient of first type, we now proceed to multiple differentiation.
\begin{theorem}\label{theorem5}
For  an arbitrarily-quantized operator $\textbf{A}_{m,n}$, multiple differentiation with respect to the momentum yields
\begin{equation}\label{f20}
\frac{\partial^s \textbf{A}_{m,n}}{\partial \textbf{p}^s}=\frac{m!}{(m-s)!}\textbf{A}_{m-s,n}.
\end{equation}
where $s\in\mathbb{Z}_+$.
\end{theorem}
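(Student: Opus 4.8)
The plan is to prove this by induction on $s$, using Theorem~\ref{theorem1} as both the base case and the engine of the inductive step, together with the linearity established in Theorem~\ref{theorem0} to dispose of the scalar prefactors. For the base case $s=1$, Theorem~\ref{theorem1} already gives $\partial \textbf{A}_{m,n}/\partial \textbf{p} = m\,\textbf{A}_{m-1,n}$, which agrees with the claimed formula since $m!/(m-1)! = m$.

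For the inductive step, I would assume the formula holds for some $s$ and write the $(s+1)$-th derivative as a single further differentiation of the $s$-th, substituting the inductive hypothesis:
\begin{equation}
\frac{\partial^{s+1} \textbf{A}_{m,n}}{\partial \textbf{p}^{s+1}} = \frac{\partial}{\partial \textbf{p}}\left(\frac{\partial^{s} \textbf{A}_{m,n}}{\partial \textbf{p}^{s}}\right) = \frac{\partial}{\partial \textbf{p}}\left(\frac{m!}{(m-s)!}\,\textbf{A}_{m-s,n}\right).
\end{equation}
Pulling the scalar out by linearity and then applying Theorem~\ref{theorem1} to $\textbf{A}_{m-s,n}$—with the role of the momentum exponent now played by $m-s$—gives $(m-s)\,\textbf{A}_{m-s-1,n}$, so that
\begin{equation}
\frac{\partial^{s+1} \textbf{A}_{m,n}}{\partial \textbf{p}^{s+1}} = \frac{m!}{(m-s)!}\,(m-s)\,\textbf{A}_{m-s-1,n} = \frac{m!}{(m-s-1)!}\,\textbf{A}_{m-(s+1),n},
\end{equation}
which is exactly the claim with $s$ replaced by $s+1$, closing the induction.

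The step I expect to require the most care is the reapplication of Theorem~\ref{theorem1} at each stage: one must verify that $\textbf{A}_{m-s,n}$ is still an arbitrarily-quantized operator of precisely the form to which Theorem~\ref{theorem1} applies. This holds because the second-type derivative leaves the ordering coefficients $a_j^{(n)}$ untouched and merely lowers the momentum exponent, so the same ordering rule propagates through every stage of the induction without modification. I would also flag the implicit restriction $s \le m$, which ensures that the intermediate momentum powers $m, m-1, \dots, m-s$ remain nonnegative; for $s > m$ the repeated derivative vanishes, consistent with the reciprocal factorial $1/(m-s)!$ vanishing there.
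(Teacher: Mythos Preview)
Your proof is correct and follows essentially the same inductive strategy as the paper, with the same base case drawn from Theorem~\ref{theorem1}. The only difference is cosmetic: in the inductive step the paper re-expands the limit definition and the binomial $(\textbf{p}+\delta\textbf{1})^{m-u}$ from scratch, whereas you simply reinvoke Theorem~\ref{theorem1} on $\textbf{A}_{m-s,n}$, which is cleaner and avoids repeating that computation; your remarks on the coefficients $a_j^{(n)}$ persisting and on the implicit range $s\le m$ are also apt and go slightly beyond what the paper makes explicit.
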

\begin{proof}
For this instance we use mathematical induction. From Theorem \ref{theorem1}, we have the base case
\begin{equation}\label{f21}
\frac{\partial \textbf{A}_{m,n}}{\partial \textbf{p}}=m\textbf{A}_{m-1,n}.
\end{equation}
For the induction step, we let $u\in\mathbb{Z}_+$ and assume that \eqref{f20} is true for $s=u$. Then for $s=u+1$, we have
\begin{equation}\label{f22}
\frac{\partial^{u+1} \textbf{A}_{m,n}}{\partial \textbf{p}^{u+1}}=\frac{\partial}{\partial \textbf{p}}\bigg(\frac{m!}{(m-u)!}\textbf{A}_{m-u,n}\bigg).
\end{equation}
Next, we again use the limit definition of the derivative to arrive at
\begin{equation}\label{f23}
\frac{\partial^{u+1} \textbf{A}_{m,n}}{\partial \textbf{p}^{u+1}}=\frac{m!}{(m-u)!}\lim\limits_{\delta\rightarrow 0}\sum_{j=0}^{n}\frac{a_j^{(n)}}{\delta}\textbf{q}^j\bigg\{(\textbf{p}+\delta\textbf{1})^{m-u}-\textbf{p}^{m-u}\bigg\}\textbf{q}^{n-j}
\end{equation}
where the terms in the curly bracket can be expanded such that 
\begin{equation}\label{f24}
\frac{\partial^{u+1} \textbf{A}_{m,n}}{\partial \textbf{p}^{u+1}}=\frac{m!}{(m-u)!}\lim\limits_{\delta\rightarrow 0}\sum_{j=0}^{n}\frac{a_j^{(n)}}{\delta}\textbf{q}^j\bigg\{\sum_{k=1}^{m-u}\binom{m-u}{k}\textbf{p}^{m-u-k}\delta^k\bigg\}\textbf{q}^{n-j}.
\end{equation}
The non-vanishing terms in equation \eqref{f24} are for all values of $j$ and for $k=1$. This leads to
\begin{equation}\label{f25}
\frac{\partial^{u+1} \textbf{A}_{m,n}}{\partial \textbf{p}^{u+1}}=\frac{m!}{(m-u)!}(m-u)\sum_{j=0}^{n}a_j^{(n)}\textbf{q}^j\textbf{p}^{m-u-1}\textbf{q}^{n-j}
\end{equation}
which can be simplified as
\begin{equation}\label{f26}
\frac{\partial^{u+1} \textbf{A}_{m,n}}{\partial \textbf{p}^{u+1}}=\frac{m!}{(m-u-1)!}\textbf{A}_{m-u-1,n}.
\end{equation}
Thus, \eqref{f20} is true for $s=u+1$ and by the virtue of mathematical induction, \eqref{f20} is true for all $s\in\mathbb{Z}_+$.
\end{proof}
Next for multiple differentiation with respect to the position operator, we have the following Theorem.
\noindent
\begin{theorem}\label{theorem6}
For  an arbitrarily-quantized operator $\textbf{A}_{m,n}$, multiple differentiation with respect to the position is given by
\begin{equation}\label{f27}
\frac{\partial^t \textbf{A}_{m,n}}{\partial \textbf{q}^t}=\frac{n!}{(n-t)!}\textbf{A}_{m,n-t}.
\end{equation}
where $t\in\mathbb{Z}_+$.
\end{theorem}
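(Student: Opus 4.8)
The plan is to proceed by mathematical induction on $t$, mirroring exactly the argument used for Theorem \ref{theorem5} but with the roles of $\textbf{p}$ and $\textbf{q}$ interchanged. The base case $t=1$ is furnished directly by Theorem \ref{theorem1}, specifically equation \eqref{e1}, which gives $\frac{\partial \textbf{A}_{m,n}}{\partial \textbf{q}}=n\textbf{A}_{m,n-1}$. This is precisely \eqref{f27} with $t=1$, so the induction has a valid starting point. The linearity established in Theorem \ref{theorem0} will justify pulling the coefficients $b_j^{(m)}$ and the accumulated prefactor outside the limit at each stage.

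For the inductive step I would fix $v\in\mathbb{Z}_+$ and assume \eqref{f27} holds for $t=v$, namely $\frac{\partial^v \textbf{A}_{m,n}}{\partial \textbf{q}^v}=\frac{n!}{(n-v)!}\textbf{A}_{m,n-v}$. Applying one further $\textbf{q}$-derivative and using the representation $\textbf{A}_{m,n-v}=\sum_{j=0}^m b_j^{(m)}\textbf{p}^j\textbf{q}^{n-v}\textbf{p}^{m-j}$, in which the position factor sits cleanly between the momentum factors, I would invoke the limit definition to form the difference quotient. The key computation is to expand $(\textbf{q}+\delta\textbf{1})^{n-v}$ by the binomial theorem: the $k=0$ term cancels against the subtracted $\textbf{q}^{n-v}$, every term with $k\geq 2$ carries a surplus factor of $\delta$ and vanishes as $\delta\to 0$, and only the $k=1$ term survives, contributing a factor of $(n-v)$.

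Collecting the surviving contribution yields $\frac{n!}{(n-v)!}(n-v)\sum_{j=0}^m b_j^{(m)}\textbf{p}^j\textbf{q}^{n-v-1}\textbf{p}^{m-j}$, and the factorial bookkeeping $\frac{n!}{(n-v)!}(n-v)=\frac{n!}{(n-v-1)!}$ together with the definition $\textbf{A}_{m,n-v-1}=\sum_{j=0}^m b_j^{(m)}\textbf{p}^j\textbf{q}^{n-v-1}\textbf{p}^{m-j}$ gives $\frac{\partial^{v+1}\textbf{A}_{m,n}}{\partial \textbf{q}^{v+1}}=\frac{n!}{(n-v-1)!}\textbf{A}_{m,n-v-1}$, which is exactly \eqref{f27} at $t=v+1$. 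This closes the induction. I do not anticipate any genuine obstacle, since the argument is structurally identical to that of Theorem \ref{theorem5}; the only points requiring care are selecting the $b_j^{(m)}$ representation so that the differentiated position operators are isolated between the momentum factors, and keeping the factorial arithmetic consistent so that $(n-v-1)!=(n-(v+1))!$ matches the claimed form.
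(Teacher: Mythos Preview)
Your proposal is correct and follows essentially the same approach as the paper's own proof: induction on $t$, base case supplied by Theorem~\ref{theorem1}, and an inductive step that applies the limit definition to the $b_j^{(m)}$ representation, expands $(\textbf{q}+\delta\textbf{1})^{n-v}$ binomially, and isolates the surviving $k=1$ contribution to obtain the factor $(n-v)$. The factorial bookkeeping and closing of the induction match the paper exactly, up to the cosmetic choice of induction variable.
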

\begin{proof}
We use mathematical induction. From Theorem \ref{theorem1}, we have the base case
\begin{equation}\label{f28}
\frac{\partial \textbf{A}_{m,n}}{\partial \textbf{q}}=n\textbf{A}_{m,n-1}.
\end{equation}
For the induction step, we now let $u\in\mathbb{Z}_+$ and assume that \eqref{f27} is true for $t=u$. Then for $t=u+1$, we have
\begin{equation}\label{f29}
\frac{\partial^{u+1} \textbf{A}_{m,n}}{\partial \textbf{q}^{u+1}}=\frac{\partial}{\partial \textbf{q}}\bigg(\frac{n!}{(n-u)!}\textbf{A}_{m,n-u}\bigg).
\end{equation}
Again, we use the limit definition of the derivative to obtain
\begin{equation}\label{f30}
\frac{\partial^{u+1} \textbf{A}_{m,n}}{\partial \textbf{q}^{u+1}}=\frac{n!}{(n-u)!}\lim\limits_{\delta\rightarrow 0}\sum_{j=0}^{m}\frac{b_j^{(m)}}{\delta}\textbf{p}^j\bigg\{(\textbf{q}+\delta\textbf{1})^{n-u}-\textbf{q}^{n-u}\bigg\}\textbf{p}^{m-j}
\end{equation}
which can be expanded as
\begin{equation}\label{f31}
\frac{\partial^{u+1} \textbf{A}_{m,n}}{\partial \textbf{q}^{u+1}}=\frac{n!}{(n-u)!}\lim\limits_{\delta\rightarrow 0}\sum_{j=0}^{m}\frac{b_j^{(m)}}{\delta}\textbf{p}^j\bigg\{\sum_{k=1}^{n-u}\binom{n-u}{k}\textbf{q}^{n-u-k}\delta^k\bigg\}\textbf{p}^{m-j}.
\end{equation}
The non-vanishing terms are for all values of $j$ and for $k=1$. This results in
\begin{equation}\label{f32}
\frac{\partial^{u+1} \textbf{A}_{m,n}}{\partial \textbf{q}^{u+1}}=\frac{n!}{(n-u)!}(n-u)\sum_{j=0}^{m}b_j^{(m)}\textbf{p}^j\textbf{q}^{n-u-1}\textbf{p}^{m-j}
\end{equation}
which simplifies into
\begin{equation}\label{f33}
\frac{\partial^{u+1} \textbf{A}_{m,n}}{\partial \textbf{q}^{u+1}}=\frac{n!}{(n-u-1)!}\textbf{A}_{m,n-u-1}.
\end{equation}
Thus, \eqref{f27} is true for $t=u+1$ and by the virtue of mathematical induction, \eqref{f27} is true for all $t\in\mathbb{Z}_+$.
\end{proof}
Notice that differentiation acts like a lowering operator which can be continuously done until the index becomes zero. The question now is whether or not these results on multiple differentiation also apply to operators with negative index. Let us look at the result in \eqref{f33}. If the index $n$ becomes $-n$, we now have terms of factorial with negative arguments. This is ill-defined and thus we resort to extending the factorial to gamma functions where the negative arguments are now easier to handle. However, another problem arises when evaluating specifically the gamma function of negative integers. Generally, the gamma function diverges when evaluated at negative integers and zero. For instance if we have
\begin{equation}\label{add17}
    s=\frac{\Gamma(-n)}{\Gamma(-n-m)} \quad\quad\quad n,m\in\mathbb{N},
\end{equation}
we now encounter a quantity $s$ with an indeterminate form $\frac{\infty}{\infty}$. We address this using the reflection formula given by \cite{arfken}
\begin{equation}\label{af34}
    \Gamma(-z)=-\frac{\pi\csc(\pi z)}{\Gamma(z+1)}.
\end{equation}
Using \eqref{add17} as an example, we have
\begin{equation}\label{add18}
    \frac{\Gamma(-n)}{\Gamma(-n-m)}:=\lim_{z\rightarrow n}\frac{\Gamma(-z)}{\Gamma(-z-m)}
\end{equation}
which, using the reflection formula, can be expressed as
\begin{equation}\label{add19}
    \frac{\Gamma(-n)}{\Gamma(-n-m)}=\lim_{z\rightarrow n}\frac{-\pi/[\sin(\pi z)\Gamma(z+1)]}{-\pi/[\sin(\pi (z+m))\Gamma(z+m+1)]}.
\end{equation}
Instead of dealing with gamma functions of negative integers, using the reflection formula resulted in evaluating limits of sine terms. We rewrite \eqref{add19} to obtain
\begin{equation}\label{add20}
    \frac{\Gamma(-n)}{\Gamma(-n-m)}=\lim_{z\rightarrow n}\frac{\sin(\pi(z+m))\Gamma(z+m+1)}{\sin(\pi z)\Gamma(z+1)}
\end{equation}
where we use L'Hopital's rule to address the indeterminate form due to the sine terms. We now arrive at
\begin{equation}\label{add20a}
    \frac{\Gamma(-n)}{\Gamma(-n-m)}=\lim_{z\rightarrow n}\frac{\cos(\pi(z+m))\Gamma(z+m+1)}{\cos(\pi z)\Gamma(z+1)}
\end{equation}
which further simplifies as
\begin{equation}\label{add20b}
    \frac{\Gamma(-n)}{\Gamma(-n-m)}=\frac{(-1)^{n+m}(n+m)!}{(-1)^nn!}.
\end{equation}
We now have
\begin{equation}\label{add20c}
    \frac{\Gamma(-n)}{\Gamma(-n-m)}=(-1)^m\frac{(n+m)!}{n!}
\end{equation}
which is a well-defined quantity. 
\\
\indent
This result is important in evaluating multiple differentiation of operators with negative index. We now investigate this case in the succeeding Theorem.
\begin{theorem}\label{theorem7}
For  an arbitrarily-quantized operator $\textbf{A}_{-m,n}$, multiple differentiation with respect to the momentum gives
\begin{equation}\label{f34}
\frac{\partial^s \textbf{A}_{-m,n}}{\partial \textbf{p}^s}=(-1)^s\frac{(m+s-1)!}{(m-1)!}\textbf{A}_{-m-s,n}.
\end{equation}
where $s\in\mathbb{Z}_+$.
\end{theorem}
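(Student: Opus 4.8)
The plan is to prove \eqref{f34} by induction on $s$, in direct parallel with the inductions used for Theorems \ref{theorem5} and \ref{theorem6}, except that here the single-step negative-power derivative of Theorem \ref{theorem2} takes over the role played by Theorem \ref{theorem1} in the positive-power case. For the base case $s=1$, Theorem \ref{theorem2} gives $\partial\textbf{A}_{-m,n}/\partial\textbf{p}=-m\textbf{A}_{-m-1,n}$, and I would check that the right-hand side of \eqref{f34} evaluated at $s=1$ reads $(-1)\tfrac{m!}{(m-1)!}\textbf{A}_{-m-1,n}=-m\textbf{A}_{-m-1,n}$, so that the two agree and the induction is anchored.

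For the induction step I would assume \eqref{f34} holds at $s=u$ and write
\begin{equation}
\frac{\partial^{u+1}\textbf{A}_{-m,n}}{\partial\textbf{p}^{u+1}}
=\frac{\partial}{\partial\textbf{p}}\left((-1)^u\frac{(m+u-1)!}{(m-1)!}\textbf{A}_{-m-u,n}\right).
\end{equation}
The operator $\textbf{A}_{-m-u,n}$ is once more a negative-power monomial, now carrying exponent $-(m+u)$, so Theorem \ref{theorem2} applies verbatim with $m$ replaced by $m+u$ and yields $\partial\textbf{A}_{-m-u,n}/\partial\textbf{p}=-(m+u)\textbf{A}_{-m-u-1,n}$. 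Substituting this and collecting the scalar prefactors gives
\begin{equation}
\frac{\partial^{u+1}\textbf{A}_{-m,n}}{\partial\textbf{p}^{u+1}}
=(-1)^{u+1}\frac{(m+u-1)!\,(m+u)}{(m-1)!}\textbf{A}_{-m-u-1,n}
=(-1)^{u+1}\frac{(m+u)!}{(m-1)!}\textbf{A}_{-m-(u+1),n},
\end{equation}
which is exactly \eqref{f34} at $s=u+1$, closing the induction. If a fully self-contained argument is preferred, the single differentiation inside the step can instead be recomputed from the limit definition using the identity $(\textbf{p}+\delta\textbf{1})^{-(m+u)}-\textbf{p}^{-(m+u)}=-[(\textbf{p}+\delta\textbf{1})^{m+u}-\textbf{p}^{m+u}]\textbf{p}^{-(m+u)}(\textbf{p}+\delta\textbf{1})^{-(m+u)}$ and retaining only the $k=1$ term of the binomial expansion, exactly as in the passage from \eqref{f8} to \eqref{f9}.

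The step I expect to require the most care is the scalar bookkeeping rather than any operator manipulation: one must track the sign change $(-1)^u\mapsto(-1)^{u+1}$ together with the factorial telescoping $(m+u-1)!\,(m+u)=(m+u)!$, so that the falling-factorial pattern $(-m)(-m-1)\cdots(-m-s+1)=(-1)^s(m+s-1)!/(m-1)!$ is faithfully reproduced. As a consistency check tying the result back to the preceding gamma-function discussion, I would observe that this coefficient is precisely the analytic continuation of the positive-power coefficient $m!/(m-s)!=\Gamma(m+1)/\Gamma(m-s+1)$ of Theorem \ref{theorem5} under $m\mapsto-m$: written as $\Gamma(1-m)/\Gamma(1-m-s)$ and evaluated by the reflection formula \eqref{add20c} with the identification $n=m-1$, this ratio collapses to $(-1)^s(m+s-1)!/(m-1)!$, confirming that the negative-index formula is the expected continuation of the positive-index one.
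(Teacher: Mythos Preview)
Your proposal is correct and follows essentially the same inductive route as the paper: anchor the base case on Theorem~\ref{theorem2}, assume the formula at $s=u$, and differentiate once more to reach $s=u+1$. The only cosmetic difference is that in the induction step you invoke Theorem~\ref{theorem2} with $m\mapsto m+u$ to obtain $\partial\textbf{A}_{-m-u,n}/\partial\textbf{p}=-(m+u)\textbf{A}_{-m-u-1,n}$ directly, whereas the paper recomputes this single derivative from the limit definition via the factorization $(\textbf{p}+\delta\textbf{1})^{-m-u}-\textbf{p}^{-m-u}=[\textbf{p}^{m+u}-(\textbf{p}+\delta\textbf{1})^{m+u}]\textbf{p}^{-m-u}(\textbf{p}+\delta\textbf{1})^{-m-u}$ and the $k=1$ survival in the binomial expansion---exactly the alternative you already flag. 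Your gamma-function consistency remark is a nice addition not present in the paper's proof.
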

\begin{proof}
We use mathematical induction. From Theorem \ref{theorem2}, we have the base case
\begin{equation}\label{f35}
\frac{\partial \textbf{A}_{-m,n}}{\partial \textbf{p}}=-m\textbf{A}_{-m-1,n}.
\end{equation}
For the induction step, we let $u\in\mathbb{Z}_+$ and assume that \eqref{f34} is true for $s=u$. Then for $s=u+1$, 
\begin{equation}\label{f36}
\frac{\partial^{u+1} \textbf{A}_{-m,n}}{\partial \textbf{p}^{u+1}}=\frac{\partial}{\partial \textbf{p}}\bigg((-1)^u\frac{(m+u-1)!}{(m-1)!}\textbf{A}_{-m-u,n}\bigg).
\end{equation}
Using the limit definition of the derivative, we obtain
\begin{equation}\label{f37}
\begin{split}
\frac{\partial^{u+1} \textbf{A}_{-m,n}}{\partial \textbf{p}^{u+1}}=(-1)^u\frac{(m+u-1)!}{(m-1)!}\lim\limits_{\delta\rightarrow 0}\sum_{j=0}^n\frac{a_j^{(n)}}{\delta}\textbf{q}^j\bigg[(\textbf{p}+\delta\textbf{1})^{-m-u}-\textbf{p}^{-m-u}\bigg]\textbf{q}^{n-j}
\end{split}
\end{equation}
which can be written as
\begin{equation}\label{f38}
\begin{split}
\frac{\partial^{u+1} \textbf{A}_{-m,n}}{\partial \textbf{p}^{u+1}}=(-1)^u\frac{(m+u-1)!}{(m-1)!}&\lim\limits_{\delta\rightarrow 0}\sum_{j=0}^n\frac{a_j^{(n)}}{\delta}\textbf{q}^j\bigg\{\bigg[\textbf{p}^{m+u}-(\textbf{p}+\delta\textbf{1})^{m+u}\bigg]\\&\times\textbf{p}^{-m-u}(\textbf{p}+\delta\textbf{1})^{-m-u}\bigg\}\textbf{q}^{n-j}
\end{split}
\end{equation}
where the terms in the  bracket can be modified such that
\begin{equation}\label{f39}
\begin{split}
\frac{\partial^{u+1} \textbf{A}_{-m,n}}{\partial \textbf{p}^{u+1}}=(-1)^u\frac{(m+u-1)!}{(m-1)!}&\lim\limits_{\delta\rightarrow 0}\sum_{j=0}^n\frac{a_j^{(n)}}{\delta}\textbf{q}^j\bigg\{\bigg[-\sum_{k=1}^{m+u}\binom{m+u}{k}\textbf{p}^{m+u-k}\delta^k\bigg]\\&\times\textbf{p}^{-m-u}(\textbf{p}+\delta\textbf{1})^{-m-u}\bigg\}\textbf{q}^{n-j}.
\end{split}
\end{equation}
The non-vanishing terms in equation \eqref{f39} are for all values of $j$ and for $k=1$, which then gives us
\begin{equation}\label{f40}
\begin{split}
\frac{\partial^{u+1} \textbf{A}_{-m,n}}{\partial \textbf{p}^{u+1}}=-(-1)^u\frac{(m+u-1)!}{(m-1)!}(m+u)\sum_{j=0}^{n}a_j^{(n)}\textbf{q}^j\textbf{p}^{-m-u-1}\textbf{q}^{n-j}.
\end{split}
\end{equation}
This simplifies into
\begin{equation}\label{f41}
\begin{split}
\frac{\partial^{u+1} \textbf{A}_{-m,n}}{\partial \textbf{p}^{u+1}}=(-1)^{u+1}\frac{(m+u)!}{(m-1)!}\textbf{A}_{-m-u-1,n}.
\end{split}
\end{equation}
From \eqref{f41}, we can see that \eqref{f34} is true for $s=u+1$ and by the virtue of mathematical induction, \eqref{f34} is true for all $s\in\mathbb{Z}_+$.
\end{proof}
In the classical case, we can see that
\begin{equation}\label{extra1}
    \frac{\partial^s p^{-m}q^n}{\partial p^s}=(-m)(-m-1)\cdot\cdot\cdot(-m-s+1)p^{-m-s}q^n
\end{equation}
which can also be written as 
\begin{equation}\label{extra2}
    \frac{\partial^s p^{-m}q^n}{\partial p^s}=(-1)^{s}\frac{(m+s-1)!}{(m-1)!}p^{-m-s}q^n.
\end{equation}
When we try to obtain the quantized version of \eqref{extra2}, the result goes back to Theorem \ref{theorem7} which shows its consistency even when starting from the classical case. Extending the equation in Theorem \ref{theorem2} to multiple differentiation, we have 
\begin{equation}\label{extra3}
\frac{\partial^s p^{-m}q^n}{\partial p^s}=(-1)^{s}\frac{(m+s-1)!}{(m-1)!}p^{-m-s}q^n\rightarrow\frac{\partial^s \textbf{A}_{-m,n}}{\partial \textbf{p}^s}=(-1)^s\frac{(m+s-1)!}{(m-1)!}\textbf{A}_{-m-s,n}.
\end{equation}
\begin{theorem}\label{theorem8}
For  an arbitrarily-quantized operator $\textbf{A}_{m,-n}$, multiple differentiation with respect to the position operator yields
\begin{equation}\label{f42}
\frac{\partial^t \textbf{A}_{m,-n}}{\partial \textbf{q}^t}=(-1)^t\frac{(n+t-1)!}{(n-1)!}\textbf{A}_{m,-n-t}.
\end{equation}
where $t\in\mathbb{Z}_+$.
\end{theorem}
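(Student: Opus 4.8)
The plan is to prove \eqref{f42} by mathematical induction on $t$, following the same template as Theorem \ref{theorem7} but with the roles of position and momentum interchanged. Since the operator carries a negative power in the position variable, I would work from the representation $\textbf{A}_{m,-n}=\sum_{j=0}^m b_j^{(m)}\textbf{p}^j\textbf{q}^{-n}\textbf{p}^{m-j}$, in which $\textbf{q}^{-n}$ sits isolated between the momentum factors so that differentiation with respect to $\textbf{q}$ acts cleanly on the central block. The base case $t=1$ is supplied directly by Theorem \ref{theorem3}, which gives $\partial\textbf{A}_{m,-n}/\partial\textbf{q}=-n\,\textbf{A}_{m,-n-1}$; this matches the asserted formula at $t=1$ because $(-1)^{1}(n+1-1)!/(n-1)!=-n$ and the index reduces to $\textbf{A}_{m,-n-1}$.

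For the induction step I would assume \eqref{f42} holds at $t=u$ and apply the limit definition of the derivative once more to $\textbf{A}_{m,-n-u}=\sum_{j=0}^m b_j^{(m)}\textbf{p}^j\textbf{q}^{-n-u}\textbf{p}^{m-j}$. The essential manipulation---the step I expect to be the crux---is the rewriting of the negative-power difference quotient exactly as in the proofs of Theorems \ref{theorem3} and \ref{theorem7}, namely
\begin{equation}\label{plan1}
(\textbf{q}+\delta\textbf{1})^{-n-u}-\textbf{q}^{-n-u}=\bigl[\textbf{q}^{n+u}-(\textbf{q}+\delta\textbf{1})^{n+u}\bigr]\textbf{q}^{-n-u}(\textbf{q}+\delta\textbf{1})^{-n-u}.
\end{equation}
Expanding $(\textbf{q}+\delta\textbf{1})^{n+u}$ by the binomial theorem cancels the $k=0$ term, leaving $-\sum_{k=1}^{n+u}\binom{n+u}{k}\textbf{q}^{n+u-k}\delta^k$; after dividing by $\delta$ and sending $\delta\to0$ only the $k=1$ contribution survives and $(\textbf{q}+\delta\textbf{1})^{-n-u}\to\textbf{q}^{-n-u}$, so the central block collapses to $-(n+u)\,\textbf{q}^{-n-u-1}$. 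This yields the single-derivative relation $\partial\textbf{A}_{m,-n-u}/\partial\textbf{q}=-(n+u)\,\textbf{A}_{m,-n-u-1}$.

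Finally I would combine this with the induction hypothesis and reconcile the prefactors: multiplying $(-1)^u(n+u-1)!/(n-1)!$ by $-(n+u)$ gives $(-1)^{u+1}(n+u)!/(n-1)!$, which is precisely the right-hand side of \eqref{f42} at $t=u+1$, since $(n+u)!=(n+(u+1)-1)!$ and the index becomes $\textbf{A}_{m,-n-u-1}=\textbf{A}_{m,-n-(u+1)}$. Hence the formula holds at $t=u+1$, and by induction for all $t\in\mathbb{Z}_+$. The only genuinely delicate point is justifying the factored identity \eqref{plan1} together with the claim that only the term linear in $\delta$ survives the limit; both are inherited verbatim from the earlier negative-index theorems, so beyond careful coefficient accounting no new ideas are required.
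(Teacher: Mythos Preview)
Your proposal is correct and follows essentially the same approach as the paper's own proof: induction on $t$ with the base case supplied by Theorem~\ref{theorem3}, the limit definition applied to $\textbf{A}_{m,-n-u}=\sum_{j=0}^m b_j^{(m)}\textbf{p}^j\textbf{q}^{-n-u}\textbf{p}^{m-j}$, the factored identity \eqref{plan1} to handle the negative power, binomial expansion with only the $k=1$ term surviving, and the same coefficient bookkeeping to close the induction. The paper's proof is step-for-step the same, so there is nothing further to add.
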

\begin{proof}
We use mathematical induction. From Theorem \ref{theorem3}, the base case is given by
\begin{equation}\label{f43}
\frac{\partial \textbf{A}_{m,-n}}{\partial \textbf{q}}=-n\textbf{A}_{m,-n-1}.
\end{equation}
Next, we let $u\in\mathbb{Z}_+$ and assume that \eqref{f42} is true for $t=u$. Then for $t=u+1$, 
\begin{equation}\label{f44}
\frac{\partial^{u+1} \textbf{A}_{m,-n}}{\partial \textbf{q}^{u+1}}=\frac{\partial}{\partial \textbf{q}}\bigg((-1)^u\frac{(n+u-1)!}{(n-1)!}\textbf{A}_{m,-n-u}\bigg).
\end{equation}
From the limit definition of the derivative, we have
\begin{equation}\label{f45}
\begin{split}
\frac{\partial^{u+1} \textbf{A}_{m,-n}}{\partial \textbf{q}^{u+1}}=(-1)^u\frac{(n+u-1)!}{(n-1)!}\lim\limits_{\delta\rightarrow 0}\sum_{j=0}^m\frac{b_j^{(m)}}{\delta}\textbf{p}^j\bigg[(\textbf{q}+\delta\textbf{1})^{-n-u}-\textbf{q}^{-n-u}\bigg]\textbf{p}^{m-j}
\end{split}
\end{equation}
which can be expressed as
\begin{equation}\label{f46}
\begin{split}
\frac{\partial^{u+1} \textbf{A}_{m,-n}}{\partial \textbf{q}^{u+1}}=(-1)^u\frac{(n+u-1)!}{(n-1)!}&\lim\limits_{\delta\rightarrow 0}\sum_{j=0}^m\frac{b_j^{(m)}}{\delta}\textbf{p}^j\bigg\{\bigg[\textbf{q}^{n+u}-(\textbf{q}+\delta\textbf{1})^{n+u}\bigg]\\&\times\textbf{q}^{-n-u}(\textbf{q}+\delta\textbf{1})^{-n-u}\bigg\}\textbf{p}^{m-j}
\end{split}
\end{equation}
where the terms in the  bracket can be rewritten such that
\begin{equation}\label{f47}
\begin{split}
\frac{\partial^{u+1} \textbf{A}_{m,-n}}{\partial \textbf{q}^{u+1}}=(-1)^u\frac{(n+u-1)!}{(n-1)!}&\lim\limits_{\delta\rightarrow 0}\sum_{j=0}^m\frac{b_j^{(m)}}{\delta}\textbf{p}^j\bigg\{\bigg[-\sum_{k=1}^{n+u}\binom{n+u}{k}\textbf{q}^{n+u-k}\delta^k\bigg]\\&\times\textbf{q}^{-n-u}(\textbf{q}+\delta\textbf{1})^{-n-u}\bigg\}\textbf{p}^{m-j}.
\end{split}
\end{equation}
From equation \eqref{f47}, the non-vanishing terms are for all values of $j$ and for $k=1$, which then yields
\begin{equation}\label{f48}
\begin{split}
\frac{\partial^{u+1} \textbf{A}_{m,-n}}{\partial \textbf{q}^{u+1}}=-(-1)^u\frac{(n+u-1)!}{(n-1)!}(n+u)\sum_{j=0}^{m}b_j^{(m)}\textbf{p}^j\textbf{q}^{-n-u-1}\textbf{p}^{m-j}.
\end{split}
\end{equation}
Equation \eqref{f48} can be further simplified as
\begin{equation}\label{f49}
\begin{split}
\frac{\partial^{u+1} \textbf{A}_{m,-n}}{\partial \textbf{q}^{u+1}}=(-1)^{u+1}\frac{(n+u)!}{(n-1)!}\textbf{A}_{m,-n-u-1}.
\end{split}
\end{equation}
From \eqref{f49}, we can see that \eqref{f42} is true for $t=u+1$. Then, by the virtue of mathematical induction, \eqref{f42} is true for all $t\in\mathbb{Z}_+$.
\end{proof}
Going to the classical case, we have
\begin{equation}\label{extra4}
    \frac{\partial^t p^{m}q^{-n}}{\partial q^t}=(-n)(-n-1)\cdot\cdot\cdot(-n-t+1)p^{m}q^{-n-t}
\end{equation}
which is equal to 
\begin{equation}\label{extra5}
    \frac{\partial^t p^{m}q^{-n}}{\partial q^t}=(-1)^{t}\frac{(n+t-1)!}{(n-1)!}p^{m}q^{-n-t}.
\end{equation}
When we try to obtain the quantized version of \eqref{extra5}, the result goes back to Theorem \ref{theorem8} which shows its consistency even when starting from the classical case. We extend the equation in Theorem \ref{theorem3} to multiple differentiation to obtain 
\begin{equation}\label{extra6}
\frac{\partial^t p^{m}q^{-n}}{\partial q^t}=(-1)^{t}\frac{(n+t-1)!}{(n-1)!}p^{m}q^{-n-t}\rightarrow\frac{\partial^t \textbf{A}_{m,-n}}{\partial \textbf{q}^t}=(-1)^t\frac{(n+t-1)!}{(n-1)!}\textbf{A}_{m,-n-t}.
\end{equation}

We now established some differentiation rules for operators with arbitrary ordering. From the Theorems, we can see that they almost resemble the differentiation rules we have in ordinary calculus. Even though we are working with operators, and adding the fact that they do not commute, we still see some similarities when dealing with quantum operators and going back to the classical regime. Since the non-commutativity of $\textbf{q}$ and $\textbf{p}$ was mentioned, we now investigate the differentiation rules with respect to multiple operators. Using the previous results, we present the subsequent theorem.
\begin{theorem}\label{theorem9}
The order of differentiation does not affect the resulting partial derivative. That is, for an arbitrarily-ordered operator $\textbf{A}_{m,n}$,
\begin{equation}\label{f50}
\frac{\partial^{s+t}\textbf{A}_{m,n}}{\partial\textbf{q}^t\partial\textbf{p}^s}=\frac{\partial^{s+t}\textbf{A}_{m,n}}{\partial\textbf{p}^s\partial\textbf{q}^t}
\end{equation}
for $s,t\in\mathbb{Z}_+$.
\end{theorem}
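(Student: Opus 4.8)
The plan is to recognize this as the operator analogue of Clairaut's theorem on the equality of mixed partials, and to prove it by direct computation using Theorems \ref{theorem5} and \ref{theorem6}, which already supply closed forms for the iterated single-variable derivatives. The decisive structural feature is that each of those theorems returns a scalar multiple of an operator that is \emph{again} of the arbitrarily-ordered form $\textbf{A}_{m',n'}$. Consequently the two reductions can be composed freely, and the entire content of the theorem reduces to checking that the two orders of composition produce the same scalar prefactor and the same index pair.

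First I would evaluate the right-hand side, in which $\partial\textbf{q}^t$ acts first and $\partial\textbf{p}^s$ acts last. Applying Theorem \ref{theorem6} to $\textbf{A}_{m,n}$ gives
\begin{equation}
\frac{\partial^t \textbf{A}_{m,n}}{\partial \textbf{q}^t}=\frac{n!}{(n-t)!}\textbf{A}_{m,n-t}.
\end{equation}
The output is still an arbitrarily-ordered operator with first index $m$, so Theorem \ref{theorem5} applies; using the linearity established in Theorem \ref{theorem0} to pull the scalar $\tfrac{n!}{(n-t)!}$ through the $\textbf{p}$-derivative, I obtain
\begin{equation}
\frac{\partial^{s+t}\textbf{A}_{m,n}}{\partial\textbf{p}^s\partial\textbf{q}^t}=\frac{n!}{(n-t)!}\,\frac{m!}{(m-s)!}\,\textbf{A}_{m-s,n-t}.
\end{equation}
The reversed order on the left-hand side is symmetric: Theorem \ref{theorem5} sends $\textbf{A}_{m,n}$ to $\tfrac{m!}{(m-s)!}\textbf{A}_{m-s,n}$, and Theorem \ref{theorem6} then sends this to $\tfrac{m!}{(m-s)!}\tfrac{n!}{(n-t)!}\textbf{A}_{m-s,n-t}$. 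Since the prefactors are ordinary scalars they commute, so both routes terminate at the identical operator, and equating the two expressions yields \eqref{f50}.

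I do not expect a genuine obstacle here; the substance lives entirely in the two single-variable reduction theorems, and once those are granted the commutativity of the mixed derivative is inherited directly from commutativity of scalar multiplication. The one point that deserves a sentence of care is \emph{why} the reductions compose at all: each differentiation preserves membership in the family $\{\textbf{A}_{m',n'}\}$ because the ordering coefficients $a_j^{(n)}$ and $b_j^{(m)}$ are carried along unchanged in the proofs of Theorems \ref{theorem1}, \ref{theorem5}, and \ref{theorem6}, so the intermediate operator is a legitimate input for the next derivative. Given that, the non-commutativity of $\textbf{q}$ and $\textbf{p}$ at the operator level never re-enters the argument, and an induction on $s$ and $t$ (should one prefer to avoid invoking the closed forms) would follow the same template with no additional difficulty.
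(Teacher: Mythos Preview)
Your proposal is correct and follows essentially the same route as the paper: both compute each mixed derivative by successively applying Theorems~\ref{theorem5} and~\ref{theorem6}, obtain the common value $\frac{m!}{(m-s)!}\frac{n!}{(n-t)!}\textbf{A}_{m-s,n-t}$, and conclude. Your added remarks on linearity and on why the intermediate operator remains in the $\textbf{A}_{m',n'}$ family are a slight elaboration, but the core argument is identical.
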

\begin{proof}
Starting from Theorem \ref{theorem5}, we have
\begin{equation}\label{f51}
\frac{\partial ^s \textbf{A}_{m,n}}{\partial \textbf{p}^s}=\frac{m!}{(m-s)!}\textbf{A}_{m-s,n}.
\end{equation}
We now take the derivative of both sides with respect to $\textbf{q}$, $t$ times. This results in
\begin{equation}\label{f52}
\frac{\partial^{s+t}\textbf{A}_{m,n}}{\partial\textbf{q}^t\partial\textbf{p}^s}=\frac{m!}{(m-s)!}\frac{\partial^t\textbf{A}_{m-s,n}}{\partial \textbf{q}^t}.
\end{equation}
Using the result of Theorem \ref{theorem6}, we can simplify the right hand side of equation \eqref{f52} as
\begin{equation}\label{f53}
\frac{\partial^{s+t}\textbf{A}_{m,n}}{\partial\textbf{q}^t\partial\textbf{p}^s}=\frac{m!}{(m-s)!}\frac{n!}{(n-t)!}\textbf{A}_{m-s,n-t}.
\end{equation}
We again use the result of Theorem \ref{theorem6} given by
\begin{equation}\label{f54}
\frac{\partial^t \textbf{A}_{m,n}}{\partial \textbf{q}^t}=\frac{n!}{(n-t)!}\textbf{A}_{m,n-t}.
\end{equation}
Differentiating both sides with respect to $\textbf{p}$, $s$ times, we have
\begin{equation}\label{f55}
\frac{\partial^{s+t} \textbf{A}_{m,n}}{\partial \textbf{p}^s\partial \textbf{q}^t}=\frac{n!}{(n-t)!}\frac{\partial ^s\textbf{A}_{m,n-t}}{\partial \textbf{p}^s}.
\end{equation}
Using Theorem \ref{theorem5}, the right hand side simplifies which is given by
\begin{equation}\label{f56}
\frac{\partial^{s+t} \textbf{A}_{m,n}}{\partial \textbf{p}^s\partial \textbf{q}^t}=\frac{n!}{(n-t)!}\frac{m!}{(m-s)!}\textbf{A}_{m-s,n-t}.
\end{equation}
Notice that both the right hand sides of equations \eqref{f53} and \eqref{f56} are equal. Therefore, both their left hand sides should also equate. Finally, we have
\begin{equation}\label{f57}
\frac{\partial^{s+t}\textbf{A}_{m,n}}{\partial\textbf{q}^t\partial\textbf{p}^s}=\frac{\partial^{s+t}\textbf{A}_{m,n}}{\partial\textbf{p}^s\partial\textbf{q}^t}.
\end{equation}
\end{proof}
This observation is quite remarkable since $\textbf{q}$ and $\textbf{p}$ are non-commuting in the first place. It is somehow expected that changing the order of differentiation with respect to these operators leads to a different result. However, as we have shown, the order of $\textbf{q}$ and $\textbf{p}$ does not matter when dealing with differentiating an operator with respect to another operator.
\section{Quantum Equations of Motion for Weyl, Simplest Symmetric, and Born-Jordan Quantized Operators}\label{2.4}
Now that we have shed some light about the derivative of an operator, let us go back to the quantum equations of motion since this is where the issue came from. We recall that
\begin{equation}\label{f58}
\begin{split}
&[\textbf{H},\textbf{q}]=-i\hbar\frac{\partial \textbf{H}}{\partial \textbf{p}}\quad\quad\quad\text{and}\quad\quad\quad[\textbf{H},\textbf{p}]=i\hbar\frac{\partial \textbf{H}}{\partial \textbf{q}}.
\end{split}
\end{equation}
Suppose the classical Hamiltonian has the form $H(q,p)=p^mq^n$ and its arbitrarily-quantized operator counterpart is $\textbf{H}=\textbf{A}_{m,n}$. Starting with the first commutator, we have
\begin{equation}\label{p16}
    [\textbf{A}_{m,n},\textbf{q}]=\sum_{j=0}^na_j^{(n)}\textbf{q}^j\textbf{p}^m\textbf{q}^{n-j+1}-\sum_{j=0}^na_j^{(n)}\textbf{q}^{j+1}\textbf{p}^m\textbf{q}^{n-j}
\end{equation}
where we use the relation
\begin{equation}\label{p17}
    \textbf{q}\textbf{p}^m=\textbf{p}^m\textbf{q}+i\hbar m\textbf{p}^{m-1}
\end{equation}
to rewrite the second term in the RHS of \eqref{p16}. This yields
\begin{equation}\label{p18}
\begin{split}
    [\textbf{A}_{m,n},\textbf{q}]=\sum_{j=0}^na_j^{(n)}\textbf{q}^j\textbf{p}^m\textbf{q}^{n-j+1}&-\sum_{j=0}^na_j^{(n)}\textbf{q}^{j}\textbf{p}^m\textbf{q}^{n-j+1}\\&-i\hbar m\sum_{j=0}^na_j^{(n)}\textbf{q}^{j}\textbf{p}^{m-1}\textbf{q}^{n-j}
\end{split}
\end{equation}
which simplifies as
\begin{equation}\label{p19}
\begin{split}
    [\textbf{A}_{m,n},\textbf{q}]=-i\hbar m\sum_{j=0}^na_j^{(n)}\textbf{q}^{j}\textbf{p}^{m-1}\textbf{q}^{n-j}
\end{split}
\end{equation}
or equivalently,
\begin{equation}\label{p20}
\begin{split}
    [\textbf{A}_{m,n},\textbf{q}]=-i\hbar m\textbf{A}_{m-1,n}.
\end{split}
\end{equation}
For the next quantum equation of motion, we have
\begin{equation}\label{p21}
    [\textbf{A}_{m,n},\textbf{p}]=\sum_{j=0}^mb_j^{(m)}\textbf{p}^j\textbf{q}^n\textbf{p}^{m-j+1}-\sum_{j=0}^mb_j^{(m)}\textbf{p}^{j+1}\textbf{q}^n\textbf{p}^{m-j}.
\end{equation}
We use the relation given by
\begin{equation}\label{p22}
    \textbf{p}\textbf{q}^n=\textbf{q}^n\textbf{p}-i\hbar n\textbf{q}^{n-1}
\end{equation}
to expand the second term in the RHS of \eqref{p21}. We now have
\begin{equation}\label{p23}
\begin{split}
    [\textbf{A}_{m,n},\textbf{p}]=&\sum_{j=0}^mb_j^{(m)}\textbf{p}^j\textbf{q}^n\textbf{p}^{m-j+1}-\sum_{j=0}^mb_j^{(m)}\textbf{p}^j\textbf{q}^n\textbf{p}^{m-j+1}\\&+i\hbar n\sum_{j=0}^nb_j^{(m)}\textbf{p}^{j}\textbf{q}^{n-1}\textbf{p}^{m-j}
\end{split}
\end{equation}
which can be simplified as
\begin{equation}\label{p24}
\begin{split}
    [\textbf{A}_{m,n},\textbf{p}]=i\hbar n\sum_{j=0}^mb_j^{(m)}\textbf{p}^{j}\textbf{q}^{n-1}\textbf{p}^{m-j}.
\end{split}
\end{equation}
We now write
\begin{equation}\label{p25}
\begin{split}
    [\textbf{A}_{m,n},\textbf{p}]=i\hbar n\textbf{A}_{m,n-1}.
\end{split}
\end{equation}
From Theorem \ref{theorem1}, we recall that
\begin{equation}\label{p26}
    \frac{\partial \textbf{A}_{m,n}}{\partial \textbf{p}}=m\textbf{A}_{m-1,n}
\end{equation}
and
\begin{equation}\label{p27}
    \frac{\partial \textbf{A}_{m,n}}{\partial \textbf{q}}=n\textbf{A}_{m,n-1}
\end{equation}
When we compare equations \eqref{p20} and \eqref{p26}, we can see that 
\begin{equation}\label{p28}
    [\textbf{A}_{m,n},\textbf{q}]=-i\hbar\frac{\partial \textbf{A}_{m,n}}{\partial \textbf{p}}.
\end{equation}
Now if equations \eqref{p25} and \eqref{p27} are compared, we have
\begin{equation}\label{p29}
    [\textbf{A}_{m,n},\textbf{p}]=i\hbar\frac{\partial \textbf{A}_{m,n}}{\partial \textbf{q}}.
\end{equation}
\\
\indent
From \eqref{p28} and \eqref{p29}, we can now conclude that arbitrarily quantizing the classical Hamiltonian, \textit{i.e.} $H\longrightarrow \textbf{H}=\textbf{A}_{m,n}$, satisfies the quantum equations of motion. Further expounding this, we now claim that the choice of ordering rule does not affect the validity of the resulting quantum equations of motion. Written individually, we have
\begin{equation}\label{p30}
\begin{split}
    &[\textbf{T}_{m,n},\textbf{q}]=-i\hbar\frac{\partial \textbf{T}_{m,n}}{\partial \textbf{p}}\quad\quad\text{and}\quad\quad    [\textbf{T}_{m,n},\textbf{p}]=i\hbar\frac{\partial \textbf{T}_{m,n}}{\partial \textbf{q}},
    \\&[\textbf{S}_{m,n},\textbf{q}]=-i\hbar\frac{\partial \textbf{S}_{m,n}}{\partial \textbf{p}}\;\quad\quad\text{and}\quad\quad    [\textbf{S}_{m,n},\textbf{p}]=i\hbar\frac{\partial \textbf{S}_{m,n}}{\partial \textbf{q}},
    \\&[\textbf{B}_{m,n},\textbf{q}]=-i\hbar\frac{\partial \textbf{B}_{m,n}}{\partial \textbf{p}}\quad\quad\text{and}\quad\quad    [\textbf{B}_{m,n},\textbf{p}]=i\hbar\frac{\partial \textbf{B}_{m,n}}{\partial \textbf{q}}.
\end{split}
\end{equation}
When using quantization, all the possible quantum images of the Hamiltonian will still exhibit the same quantum dynamical behavior. As we can see in \eqref{p30}, choosing either Weyl, simplest symmetric, or Born-Jordan ordering all lead to the correct quantum image, if our sole condition is to satisfy the quantum equations of motion.

To further expound the value of this result, let us apply it to the harmonic oscillator system. We recall that in the classical case, the Hamiltonian is
\begin{equation}\label{ee1}
    H=\frac{1}{2\mu}p^2+\frac{1}{2}\mu\omega^2q^2
\end{equation}
for a mass $\mu$ and frequency $\omega$. Using Weyl quantization, the quantum image of the Hamiltonian is
\begin{equation}\label{ee02}
    \textbf{H}_W=\frac{1}{2\mu}\textbf{T}_{2,0}+\frac{1}{2}\mu\omega^2\textbf{T}_{0,2}.
\end{equation}
For the position and momentum operators, we have the quantum images
\begin{equation}\label{ee2}
    q\mapsto \textbf{T}_{0,1} \quad\quad \text{and}\quad\quad p\mapsto \textbf{T}_{1,0}.
\end{equation}

We use the commutation relation for Weyl ordered forms \cite{domingo, bender3}
\begin{equation}\label{ee3}
\begin{split}
[\textbf{T}_{m,n}, \textbf{T}_{r,s}]=2\sum_{j=0}^{+\infty}&\frac{(i\hbar/2)^{2j+1}}{(2j+1)!}\sum_{k=0}^{2j+1}(-1)^k\binom{2j+1}{k}\frac{m!}{(m-k)!}\frac{n!}{(n-2j-1+k)!}  \\
&\times \frac{r!}{(r-2j-1+k)!}\frac{s!}{(s-k)!}\textbf{T}_{m+r-2j-1,n+s-2j-1},
\end{split}
\end{equation}
to simplify the commutators. For the commutator of the Hamiltonian and the position operator, we obtain
\begin{equation}\label{ee4}
    \frac{1}{2\mu}[\textbf{T}_{2,0}, \textbf{T}_{0,1}]+\frac{1}{2}\mu\omega^2[\textbf{T}_{0,2}, \textbf{T}_{0,1}]=\frac{-i\hbar}{\mu}\textbf{T}_{1,0}.
\end{equation}
The result \eqref{ee4} can be simplified into
\begin{equation}\label{ee5}
    [\textbf{H}_W,\textbf{q}]=\frac{-i\hbar}{\mu}\textbf{T}_{1,0}.
\end{equation}
Next, we calculate the partial derivative of the Weyl-quantized Hamiltonian with respect to the momentum operator. We have
\begin{equation}\label{ee6}
    \frac{\partial \textbf{H}_W}{\partial \textbf{p}}=\frac{1}{\mu}\textbf{T}_{1,0}.
\end{equation}
Comparing \eqref{ee5} and \eqref{ee6}, we can see by inspection that
\begin{equation}\label{ee7}
    [\textbf{H}_W,\textbf{q}]=-i\hbar\frac{\partial \textbf{H}_W}{\partial \textbf{p}}.
\end{equation}
Using \eqref{ee3}, we have the commutator
\begin{equation}\label{ee8}
    \frac{1}{2\mu}[\textbf{T}_{2,0}, \textbf{T}_{1,0}]+\frac{1}{2}\mu\omega^2[\textbf{T}_{0,2}, \textbf{T}_{1,0}]=\\mu\omega^2 i\hbar\textbf{T}_{0,1}
\end{equation}
which can be expressed as
\begin{equation}\label{ee9}
    [\textbf{H}_W,\textbf{p}]=\mu\omega^2i\hbar\textbf{T}_{0,1}.
\end{equation}
The partial derivative of the Hamiltonian with respect to the position operator is
\begin{equation}\label{ee10}
    \frac{\partial \textbf{H}_W}{\partial \textbf{q}}=\mu\omega^2\textbf{T}_{0,1}.
\end{equation}
From the results in \eqref{ee9} and \eqref{ee10}, it can be shown that
\begin{equation}\label{ee11}
    [\textbf{H}_W,\textbf{p}]=i\hbar\frac{\partial \textbf{H}_W}{\partial \textbf{q}}.
\end{equation}
We now have shown in \eqref{ee7} and \eqref{ee11} that Weyl quantization results to a Hamiltonian that is consistent with the quantum equations of motion.

When quantizing using the simplest symmetrization rule, we obtain the Hamiltonian
\begin{equation}\label{ee12}
    \textbf{H}_S=\frac{1}{2\mu}\textbf{S}_{2,0}+\frac{1}{2}\mu\omega^2\textbf{S}_{0,2}
\end{equation}
and we use the commutation relation for the basis operators given by \cite{domingo}
\begin{equation}\label{ee13}
\begin{split}
[\textbf{S}_{m,n}, \textbf{S}_{r,s}]&=2  \sum_{j=0}^{+\infty}\frac{(i\hbar/2)^{2j+1}}{(2j+1)!}  \sum_{k=0}^{2j+1}\binom{2j+1}{k}(-1)^k\sum_{l=0}^{+\infty}\frac{(i\hbar/2)^{2l}}{(2l)!}\sum_{t=0}^{+\infty}\frac{(i\hbar/2)^{2t}}{(2t)!}\\&\times \frac{m!}{(m-k-2l)!}\times\frac{n!}{(n-2j-1+k-2l)!}  \frac{r!}{(r-2j-1+k-2t)!} \\
&\times \frac{s!}{(s-k-2t)!}\sum_{u=0}^{+\infty}\frac{(-i\hbar/2)^uE_u}{u!}\frac{(m+r-2j-1-2l-2t)!}{(m+r-2j-1-2l-2t-u)!} \\
&\times
\frac{(n+s-2j-1-2l-2t)!}{(n+s-2j-1-2l-2t-u)!}\textbf{S}_{m+r-2j-1-2l-2t-u,n+s-2j-1-2l-2t-u},
\end{split}
\end{equation}
where $E_u$ are the Euler numbers of the first order. From here, we can have the commutators
\begin{equation}\label{ee14}
    \begin{split}
        &[\textbf{S}_{2,0}, \textbf{S}_{0,1}]=-2i\hbar\textbf{S}_{1,0} \\&
        [\textbf{S}_{0,2}, \textbf{S}_{0,1}]=0 \\&
        [\textbf{S}_{2,0}, \textbf{S}_{1,0}]=0 \\&
        [\textbf{S}_{0,2}, \textbf{S}_{1,0}]=2i\hbar\textbf{S}_{0,1}.
    \end{split}
\end{equation}
Therefore,
\begin{equation}\label{ee15}
    [\textbf{H}_S,\textbf{q}]=-\frac{i\hbar}{\mu}\textbf{S}_{1,0}
\end{equation}
and
\begin{equation}\label{ee16}
    [\textbf{H}_S,\textbf{p}]=\mu\omega^2i\hbar\textbf{S}_{0,1}.
\end{equation}
Taking the partial derivative of the Hamiltonian gives
\begin{equation}\label{ee17}
    \frac{\partial \textbf{H}_S}{\partial \textbf{p}}=\frac{1}{\mu}\textbf{S}_{1,0}.
\end{equation}
and
\begin{equation}\label{ee18}
    \frac{\partial \textbf{H}_S}{\partial \textbf{q}}=\mu\omega^2\textbf{S}_{0,1}.
\end{equation}
Comparing equations \eqref{ee15} and \eqref{ee17}, then \eqref{ee16} and \eqref{ee18}, we can see by inspection that
\begin{equation}\label{ee19}
    [\textbf{H}_S,\textbf{q}]=-i\hbar \frac{\partial \textbf{H}_S}{\partial \textbf{p}}
\end{equation}
and
\begin{equation}\label{ee20}
    [\textbf{H}_S,\textbf{p}]=i\hbar \frac{\partial \textbf{H}_S}{\partial \textbf{q}},
\end{equation}
which is consistent with the quantum analogue of the Hamilton's equations of motion.

Lastly if we quantize using the Born-Jordan ordering, we have the Hamiltonian
\begin{equation}\label{ee21}
    \textbf{H}_{BJ}=\frac{1}{2\mu}\textbf{B}_{2,0}+\frac{1}{2}\mu\omega^2\textbf{B}_{0,2}.
\end{equation}
The commutation relation for the Born-Jordan basis operators is \cite{domingo}
\begin{equation}\label{ee22}
\begin{split}
[\textbf{B}_{m,n}, \textbf{B}_{r,s}]&=2  \sum_{j=0}^{+\infty}\frac{(i\hbar/2)^{2j+1}}{(2j+1)!}  \sum_{k=0}^{2j+1}\binom{2j+1}{k}(-1)^k\sum_{l=0}^{+\infty}\frac{(i\hbar/2)^{2l}}{(2l+1)!}\sum_{t=0}^{+\infty}\frac{(i\hbar/2)^{2t}}{(2t+1)!}  \\ 
&\times\frac{m!}{(m-k-2l)!}\frac{n!}{(n-2j-1+k-2l)!}  \frac{r!}{(r-2j-1+k-2t)!} \\
&\times\frac{s!}{(s-k-2t)!} \sum_{u=0}^{+\infty}\frac{(-i\hbar)^uB_u(1/2)}{u!}\frac{(m+r-2j-1-2l-2t)!}{(m+r-2j-1-2l-2t-u)!} \\
&\times
\frac{(n+s-2j-1-2l-2t)!}{(n+s-2j-1-2l-2t-u)!}\textbf{B}_{m+r-2j-1-2l-2t-u,n+s-2j-1-2l-2t-u},
\end{split}
\end{equation}
where $B_u(1/2)$ are the Bernoulli polynomials evaluated at $1/2$. From here, we can have the commutators
\begin{equation}\label{ee23}
    \begin{split}
        &[\textbf{B}_{2,0}, \textbf{B}_{0,1}]=-2i\hbar\textbf{B}_{1,0} \\&
        [\textbf{B}_{0,2}, \textbf{B}_{0,1}]=0 \\&
        [\textbf{B}_{2,0}, \textbf{B}_{1,0}]=0 \\&
        [\textbf{B}_{0,2}, \textbf{B}_{1,0}]=2i\hbar\textbf{B}_{0,1}
    \end{split}
\end{equation}
which shows that
\begin{equation}\label{ee24}
    [\textbf{H}_{BJ},\textbf{q}]=-\frac{i\hbar}{\mu}\textbf{B}_{1,0}
\end{equation}
and
\begin{equation}\label{ee25}
    [\textbf{H}_{BJ},\textbf{p}]=\mu\omega^2i\hbar\textbf{B}_{0,1}.
\end{equation}
Taking the partial derivative of the Hamiltonian gives
\begin{equation}\label{ee26}
    \frac{\partial \textbf{H}_{BJ}}{\partial \textbf{p}}=\frac{1}{\mu}\textbf{B}_{1,0}.
\end{equation}
and
\begin{equation}\label{ee27}
    \frac{\partial \textbf{H}_{BJ}}{\partial \textbf{q}}=\mu\omega^2\textbf{B}_{0,1}.
\end{equation}
When we compare equations \eqref{ee24} and \eqref{ee26}, then \eqref{ee25} and \eqref{ee27}, it can be shown that
\begin{equation}\label{ee28}
    [\textbf{H}_{BJ},\textbf{q}]=-i\hbar \frac{\partial \textbf{H}_{BJ}}{\partial \textbf{p}}
\end{equation}
and
\begin{equation}\label{ee29}
    [\textbf{H}_{BJ},\textbf{p}]=i\hbar \frac{\partial \textbf{H}_{BJ}}{\partial \textbf{q}}.
\end{equation}
The result yields a set of quantum equations of motion where in this case, the Hamiltonian is quantized using the Born-Jordan ordering.

As shown in the harmonic oscillator example, we can arrive at the correct equations of motion regardless of the quantization rule being used. This supports our claim in \eqref{p30} that all the common quantization rules are valid choices when the goal is to arrive at the quantum analogue of the Hamilton's equations of motion.

\section{Conclusion}
We addressed the problem of finding the appropriate quantum image of a classical observable by constructing the quantum analogue of Hamilton's equations of motion.

From the fundamental works of Born and Jordan \cite{bj1925a, bj1926b}, which was used in the works of de Gosson \cite{j2degossonbook,j9degosson2014}, it was shown that the equivalence of the Schr{\"o}dinger and Heisenberg pictures of quantum mechanics takes place when mapping quantum operators from classical observables using the Born-Jordan quantization. This stems from the idea that the quantization of the classical Hamiltonian should result to an operator that obeys the similar dynamics governed by the Hamilton's equations of motion in classical mechanics.

To construct the quantum equations of motion, it is necessary to give meaning to a derivative of an operator with respect to another operator. The first definition called the differential quotient of first type defines the derivative as a linear combination of cyclic permutations of the factors in the operator being differentiated. This was the one originally imposed by Born and Jordan \cite{bj1925a} to construct the quantum equations of motion. This works for Hamiltonians being quantized using the Born-Jordan quantization but the consistency of this derivative does not hold for Weyl and symmetrically quantized Hamiltonians. In detail, we have shown in Section \ref{2.2} that this definition of the derivative can be modified such that the results are consistent for each basis. The symmetry of the amplitudes of each contributing terms in the derivatives play a role in the construction of the modifications. For the derivative that is consistent with the Born-Jordan quantization, the terms in the derivative all have coefficients equal to one. This results to a derivative that can be expressed in terms of the Born-Jordan basis operators $\textbf{B}_{m,n}$. We modified this definition for it to be expressed in terms of other basis operators. In the Weyl basis, the amplitudes of the resulting terms in the derivative were changed into the binomial coefficients. This way of defining the differential quotient of first type results to derivatives that can be expressed in terms of Weyl-ordered forms $\textbf{T}_{m,n}$. Another modification done was changing the coefficients into Kronecker deltas that are symmetrically distributed. This resulted to another definition of the derivative that result to operators that are in terms of $\textbf{S}_{m,n}$. A noteworthy remark about the three versions of the differential quotient of first type is that they are only valid for their respective basis. The original definition only works in the Born-Jordan basis, while the modifications done are only consistent in either Weyl or symmetric basis.

The differential quotient of second type defines the derivative as a limit, similar to the one being used in ordinary calculus. In Section \ref{2.3}, we started with the definition of the derivatives of operators using limits followed by proof of linearity of the differential quotient of second type. The theorems presented in this section deal with its properties which includes differentiation with respect to the position and momentum operators, differentiation of operators with negative index, and multiple differentiation. All these theorems apply to operators of arbitrary ordering, and is not dependent on a specific quantization rule. Unlike the differential quotient of first type, this definition of derivative is more general and it shows consistency with ordinary calculus. This supports the quantum-classical correspondence as we map the operators back into the classical phase space. Since the differential quotient of second type applies to operators of arbitrary ordering, the modification of the differential quotient of first type is necessary for the two definitions to be consistent with each other.

After establishing the validity of the two definitions of the derivative, the quantum version of the Hamilton's equations of motion was constructed in Section \ref{2.4}. We have shown that choosing either Weyl, simplest symmetric, or Born-Jordan quantization all lead to the correct quantum equations of motion. The dynamics of the operators follow the same behavior as the classical counterpart. As an example, the quantum harmonic oscillator Hamiltonian was used to show that choosing any of the three common quantization rules all lead to the correct equations of motion.

We note that all the differentiation rules that were established in this paper are only applicable for operators that follow the general form 
\begin{equation}\label{conc1}
\textbf{A}_{m,n}:=\sum_{j=0}^{n}a_j^{(n)}\textbf{q}^j\textbf{p}^m\textbf{q}^{n-j}
\end{equation}
for non-negative $n$, and
\begin{equation}\label{conc2}
\textbf{A}_{m,n}:=\sum_{j=0}^mb_j^{(m)}\textbf{p}^j\textbf{q}^n\textbf{p}^{m-j}.
\end{equation}
for non-negative $m$. One area that was not discussed in this paper is the quantization of operators with negative powers for both the position and momentum. A proper definition for the quantum image of $p^{-m}q^{-n}$ should first be established for each ordering rule before constructing the corresponding quantum equations of motion.  Another problem arises when dealing with operators of another form, for instance  all the momentum terms are placed to the left. The differential quotient of first type fails to yield a consistent derivative (see appendix for an example). We leave these problems open in the meantime. As long as the operators follow the form in \eqref{conc1} and \eqref{conc2}, we conclude in this paper that all the common quantization rules are of equal footing, and no quantization rule is superior over the other when constructing the quantum equations of motion.

\section*{Appendix}\label{aaa}
In this appendix we investigate the resulting differential quotient of first type when the Hamiltonian is expressed in other forms, specifically when the momentum operators are placed such that they precede the position operators.

To start, let us consider a Hamiltonian given by
\begin{equation}\label{a27}
\textbf{H}=\textbf{q}\textbf{p}^m.
\end{equation}
Using the differential quotient of first type, the derivative with respect to the momentum operator is
\begin{equation}\label{aa28}
\begin{split}
\bigg(\frac{\partial\textbf{H}}{\partial \textbf{p}}\bigg)_1&=\textbf{p}^{m-1}\textbf{q}+\textbf{p}^{m-2}\textbf{q}\textbf{p}+...+\textbf{q}\textbf{p}^{m-1}
\end{split}
\end{equation}
which can be written as 
\begin{equation}\label{a28}
\begin{split}
\bigg(\frac{\partial\textbf{H}}{\partial \textbf{p}}\bigg)_1&=\sum_{j=0}^{m-1}\textbf{p}^j\textbf{q}\textbf{p}^{m-1-j}
\\&=m\textbf{B}_{m-1,1}.
\end{split}
\end{equation}
Now we rearrange the Hamiltonian such that the terms are antinormally ordered (momentum operators to the left). Using the canonical commutation relation $[\textbf{q},\textbf{p}]=i\hbar$, we obtain
\begin{equation}\label{a29}
\textbf{H}=\textbf{H}_{an}=\textbf{p}^m\textbf{q}+mi\hbar\textbf{p}^{m-1}.
\end{equation}
The derivative with respect to the momentum operator is then given by
\begin{equation}\label{aa30}
\bigg(\frac{\partial\textbf{H}_{an}}{\partial \textbf{p}}\bigg)_1=\textbf{p}^{m-1}\textbf{q}+\textbf{p}^{m-2}\textbf{q}\textbf{p}+...+\textbf{q}\textbf{p}^{m-1}+i\hbar m(m-1)\textbf{p}^{m-2}
\end{equation}
which is equal to
\begin{equation}\label{a30}
\begin{split}
\bigg(\frac{\partial\textbf{H}_{an}}{\partial \textbf{p}}\bigg)_1&=\sum_{j=0}^{m-1}\textbf{p}^j\textbf{q}\textbf{p}^{m-1-j}+i\hbar m(m-1)\textbf{p}^{m-2}
\\&=m\textbf{B}_{m-1,1}+i\hbar m(m-1)\textbf{p}^{m-2}.
\end{split}
\end{equation}
Comparing equations \eqref{a28} and \eqref{a30}, an interesting observation that can be drawn is that
\begin{equation}\label{a33}
\bigg(\frac{\partial\textbf{H}}{\partial \textbf{p}}\bigg)_1\neq\bigg(\frac{\partial\textbf{H}_{an}}{\partial \textbf{p}}\bigg)_1.
\end{equation}
Essentially $\textbf{H}$ and $\textbf{H}_{an}$ are equal and therefore their derivatives should also be equal. However, it can be seen in \eqref{a33} that this is not true when using the first definition of the derivative.

This observation can also be seen in the modifications of the differential quotient of first type. In the Weyl basis modification, we have
\begin{equation}\label{aa34}
\begin{split}
\bigg(\frac{\partial\textbf{H}}{\partial \textbf{p}}\bigg)_W&=\frac{m}{2^{m-1}}\bigg[\binom{m-1}{m-1}\textbf{p}^{m-1}\textbf{q}+\binom{m-1}{m-2}\textbf{p}^{m-2}\textbf{q}\textbf{p}+...+\binom{m-1}{0}\textbf{q}\textbf{p}^{m-1}\bigg]
\end{split}
\end{equation}
which simplifies as
\begin{equation}\label{aa35}
\bigg(\frac{\partial\textbf{H}}{\partial \textbf{p}}\bigg)_W=m\textbf{T}_{m-1,1}.
\end{equation}
Differentiating $\textbf{H}_{an}$, the result is
\begin{equation}\label{aa36}
\begin{split}
\bigg(\frac{\partial\textbf{H}_{an}}{\partial \textbf{p}}\bigg)_W=&\frac{m}{2^{m-1}}\bigg[\binom{m-1}{m-1}\textbf{p}^{m-1}\textbf{q}+\binom{m-1}{m-2}\textbf{p}^{m-2}\textbf{q}\textbf{p}+...+\binom{m-1}{0}\textbf{q}\textbf{p}^{m-1}\bigg]
\\&+i\hbar m(m-1)\textbf{p}^{m-2}.
\end{split}
\end{equation}
This is equivalent to
\begin{equation}\label{aa37}
\begin{split}
\bigg(\frac{\partial\textbf{H}_{an}}{\partial \textbf{p}}\bigg)_W=m\textbf{T}_{m-1,1}+i\hbar m(m-1)\textbf{p}^{m-2}.
\end{split}
\end{equation}
In the symmetric basis modification, the derivatives are
\begin{equation}\label{aa38}
\begin{split}
\bigg(\frac{\partial\textbf{H}}{\partial \textbf{p}}\bigg)_S&=\frac{m}{2}\bigg[\textbf{p}^{m-1}\textbf{q}+\textbf{q}\textbf{p}^{m-1}\bigg]
\\& =m\textbf{S}_{m-1,1}
\end{split}
\end{equation}
and
\begin{equation}\label{aa39}
\begin{split}
\bigg(\frac{\partial\textbf{H}_{an}}{\partial \textbf{p}}\bigg)_S&=\frac{m}{2}\bigg[\textbf{p}^{m-1}\textbf{q}+\textbf{q}\textbf{p}^{m-1}\bigg]+i\hbar m(m-1)\textbf{p}^{m-2}
\\& =m\textbf{S}_{m-1,1}+i\hbar m(m-1)\textbf{p}^{m-2}.
\end{split}
\end{equation}
Comparing \eqref{aa35} and \eqref{aa37}, then \eqref{aa38} and \eqref{aa39}, the derivatives have different values depending if the Hamiltonian is written as $\textbf{H}$ or $\textbf{H}_{an}$. An extra term $i\hbar m(m-1)\textbf{p}^{m-2}$ arises for all the three cases, posing an issue on the validity of the differential quotient of first type.

This form of the Hamiltonian is different from the ones discussed in the previous sections, where it is written as
\begin{equation}\label{aa40}
\textbf{A}_{m,n}:=\sum_{j=0}^{n}a_j^{(n)}\textbf{q}^j\textbf{p}^m\textbf{q}^{n-j}\quad\quad\quad\text{or}\quad\quad\quad\ \textbf{A}_{m,n}:=\sum_{j=0}^mb_j^{(m)}\textbf{p}^j\textbf{q}^n\textbf{p}^{m-j}.
\end{equation}
Ordering rules dictate the coefficients $a_j^n$ and $b_j^m$, along with the hermiticity of the operator. Legitimate Hamiltonians follow this form and were proven to be correct in their own respective basis, in the context of forming the quantum equations of motion as shown in Section \ref{2.4}. Writing operators in this form guarantee a consistent result when using the differential quotient of first type.

\end{document}